\documentclass[onecolumn,journal]{IEEEtran}

\usepackage{amsmath, amssymb, amsthm, amsfonts}
\usepackage[pdftex, bookmarks=true]{hyperref}
\usepackage{url}
\usepackage{cmap} %supports pdf copy-paste
\usepackage{bm}  %for bold in math mode
\usepackage{mathtools} %for Kullback-Leibler divergence
\usepackage{graphicx}
\usepackage{xcolor}
\usepackage{mdframed}

%%%%%%%%%%%%%%%%%%%%%%%%%%%%%%%%%%
\newtheorem{theorem}{Theorem}
\newtheorem{proposition}[theorem]{Proposition}
\newtheorem{corollary}[theorem]{Corollary}
\newtheorem{lemma}[theorem]{Lemma}
\newtheorem*{claim*}{Claim}

\theoremstyle{definition}
\newtheorem{remark}[theorem]{Remark}
\newtheorem*{remark*}{Remark}

\newtheorem{definition}[theorem]{Definition}
\newtheorem{problem}[theorem]{Problem}
\newtheorem*{problem*}{Problem}
\newtheorem{example}[theorem]{Example}
\newtheorem*{example*}{Example}

%%%%%%%%%%%%%%%%%%%%%%%%%%%%%%%%%
\def\ds{\displaystyle}

  %expectation
\def\P{\mathbb{P}}  %probability

\def\Rb{\mathbb{R}}

\def\Jc{\mathcal{J}}
\def\Jcor{\mathcal{J}_{\textrm{or}}}

\def\Oc{\mathcal{O}}
\def\Pc{\mathcal{P}}  %power set

\def\Xc{\mathcal{X}}
\def\Yc{\mathcal{Y}}
\def\Zc{\mathcal{Z}}
\def\qb{\bm{q}}
\def\qprb{\bm{q'}}
\def\rb{\bm{r}}
\def\rprb{\bm{r'}}
\def\rhatb{\bm{\hat{r}}}
\def\ab{\bm{a}}
\def\aprb{\bm{a'}}
\def\bb{\bm{b}}
\def\pb{\bm{p}}
\def\tb{\bm{t}}
\def\ub{\bm{u}}
\def\vb{\bm{v}}
\def\eb{\bm{e}}
\def\rhob{\bm{\varrho}}
\def\alb{\bm{\alpha}}
\def\jh{\hat{\jmath}}

\def\Kq{K_{\textrm{q}}}
\def\Kr{K_{\textrm{r}}}
\def\Krst{K^\ast_{\textrm{r}}}
\def\Ka{K_{\textrm{a}}}
\def\La{L}

\def\Fr{F_{\textrm{r}}}

\def\phiq{\varphi_{\textrm{q}}}
\def\phir{\varphi_{\textrm{r}}}
\def\phia{\varphi_{\textrm{a}}}
\def\phiaX{\varphi_{\textrm{a}}^X}
\def\Qto{\stackrel{Q}{\longmapsto}}
\def\Rto{\stackrel{R}{\longmapsto}}

\def\epsact{\eps_{\mbox{\scriptsize{act}}}}
\def\epsprec{\eps_{\mbox{\scriptsize{prec}}}}

\def\al{\alpha}
\def\be{\beta}
\def\eps{\varepsilon}
\def\la{\lambda}
\def\lam{\lambda_{\min}}

\def\sm{\setminus}

\DeclareMathOperator{\supp}{supp}
\DeclareMathOperator{\ints}{int}

\def\gequp{\rotatebox[origin=c]{270}{$\geqslant$}}

\DeclareMathOperator{\VP}{VP}
\DeclareMathOperator{\THB}{TH}

%Kullback-Leibler divergence
\DeclarePairedDelimiterX{\dklx}[2]{(}{)}{#1\;\delimsize\|\;#2}
\newcommand{\dkl}{D_{\mathrm{KL}}\dklx}

\newcommand{\defeq}{\mathrel{\vcenter{\baselineskip0.5ex \lineskiplimit0pt
                     \hbox{\scriptsize.}\hbox{\scriptsize.}}}%
                     =}

\begin{document}

%\title[Conditional graph entropy]{Conditional graph entropy\\as an alternating minimization problem}
\title{Conditional graph entropy\\ as an alternating minimization problem}

\author{
    \IEEEauthorblockN{Viktor Harangi\IEEEauthorrefmark{1}, Xueyan Niu\IEEEauthorrefmark{2}, Bo Bai\IEEEauthorrefmark{2}}\\
    \IEEEauthorblockA{\IEEEauthorrefmark{1}Alfr\'ed R\'enyi Institute of Mathematics, Budapest, Hungary
    \\harangi@renyi.hu}\\
    \IEEEauthorblockA{\IEEEauthorrefmark{2}Theory Lab, Central Research Institute, 2012 Labs, Huawei Technologies Co. Ltd., Hong Kong SAR, China\\
    \{niuxueyan3,baibo8\}@huawei.com}
    \thanks{During this project the first author received partial support from NRDI 
%(National Research, Development and Innovation Office) 
(grant KKP 138270) and from the Hungarian Academy of Sciences (J\'anos Bolyai Scholarship).}
}

\maketitle

% MSC codes
%%%%%%%%%%%
% 94A17: Measures of information, entropy
% 94A29: Source coding
% 05C69: Dominating sets, independent sets, cliques

%%%%%%%%%%%%%%%%%%%%%%%%%%%%%%%%%%%%%%%%%%%%%%%%%%%%%%%%%%%%%%%%%%%%%%%%%
\begin{abstract}
Conditional graph entropy is known to be the minimal rate for a natural functional compression problem with side information at the receiver. In this paper we show that it can be formulated as an alternating minimization problem, which gives rise to a simple iterative algorithm for numerically computing (conditional) graph entropy. This also leads to a new formula which shows that conditional graph entropy is part of a more general framework: the solution of an optimization problem over a convex corner. In the special case of graph entropy (i.e., unconditioned version) this was known due to Csisz\'ar, K\"orner, Lov\'asz, Marton, and Simonyi. In that case the role of the convex corner was played by the so-called vertex packing polytope. In the conditional version it is a more intricate convex body but the function to minimize is the same. Furthermore, we describe a dual problem that leads to an optimality check and an error bound for the iterative algorithm.
\end{abstract}

%%%%%%%%%%%%%%%%%%%%%%%%%%%%%%%%%%%%%%%%%%%%%%%%%%%%%%%%%%%%%%%%%%%%%%%%%%%
\section{Introduction}

%\IEEEPARstart{W}{e}
We 
consider the problem of computing \emph{conditional graph entropy}. Orlitsky and Roche \cite{orlitsky_roche} used this entropy notion to characterize the optimal rate of lossless functional compression with side information at the decoder. Despite playing an inherent role in data compression, little can be found in the literature about conditional graph entropy. 

\subsection{Background and related work}

\subsubsection*{Conditional graph entropy}
Suppose that the random variable $X$ takes values in a finite alphabet $\Xc$, where not every pair of letters can be distinguished. Let $G$ be a graph with vertex set $V(G)=\Xc$ describing which pairs are distinguishable: $x,x' \in \Xc$ can be distinguished if and only if $xx'$ is an edge of $G$. Furthermore, we say that the sequences $x_1,\ldots,x_\ell$ and $x'_1,\ldots,x'_\ell$ are distinguishable if $x_i$ and $x'_i$ are distinguishable for at least one index $i$. We wish to encode an i.i.d.\ sequence $X_1, \ldots, X_\ell$ with high probability in a way that distinguishable sequences are mapped to different codewords. An \emph{independent set} of $G$ contains no edges, and hence any two letters in the set are indistinguishable. Therefore one possible strategy is to replace each $X_i$ with an independent set $J_i \subseteq \Xc$ containing $X_i$, and encode the sequence $J_1,\ldots,J_\ell$ instead. If we do this randomly in a way that $(X_i,J_i)$ are i.i.d.\ samples of some $(X,J)$ where $J$ is a random independent set containing $X$, then we can encode the $J_i$ sequence with rate $H(J)$ (asymptotically as $\ell \to \infty$). Note that the number of times any given typical $X_i$ sequence is covered has exponential rate $H(J|X)$. Based on this, one can design an encoding with rate $H(J) - H(J|X) = I(X;J)$. Then, for a given $X$, one needs to choose $(X,J)$ in a way that the mutual information $I(X;J)$ is as small as possible. K\"orner showed that this is the best achievable code rate and introduced the corresponding notion of \emph{graph entropy} \cite{korner1973}:
\begin{equation} \label{eq:mut_inf}
H_G(X) = \min_{J} I(X;J), \quad \mbox{where $J$ is a random independent set of $G$ such that $X \in J$.}
\end{equation}

The analogous problem with side information $Y_i$ at the receiver leads to the notion of \emph{conditional graph entropy} $H_G(X|Y)$. Let $(X,Y)$ be discrete random variables of some given joint distribution and let $(X_i,Y_i)$ be i.i.d.\ samples. We assume that the decoder knows the sequence $Y_1, Y_2, \ldots$. If we want to use the same approach (i.e., choosing a random $J$), then $J$ and $Y$ should be independent conditioned on $X$ (because the sender does not know $Y_i$ when choosing $J_i$). This can be made rigorous, leading to the following formula:
%
%\begin{multline} \label{eq:mut_inf_cond}
\begin{equation} \label{eq:mut_inf_cond}
H_G(X|Y) = \min_J I(X;J \, | \, Y) %\\ 
= \min_J \bigg( H(J | Y) 
- \underbrace{H(J|X,Y)}_{H(J|X)} \bigg),
\end{equation}
%\end{multline}
%
where $J$ is a random independent set of $G$ such that $X \in J$, and $J$ and $Y$ are conditionally independent conditioned on $X$ (which is equivalent to saying that $Y - X - J$ is a Markov chain).

A lot of work has been done regarding graph entropy since K\"orner \cite{korner1973} introduced the notion in 1973; see the surveys \cite{survey,survey2}. In particular, Csisz\'ar, K\"orner, Lov\'asz, Marton, and Simonyi \cite{entropy_splitting} found a new way to express graph entropy based on a beautiful connection to the so-called \emph{vertex packing polytope} $\VP(G)$, leading to, among other things, an elegant information theoretic characterization of \emph{perfect graphs}. This connection motivated the study of a more general framework, namely, \emph{entropy functions} corresponding to \emph{convex corners}. (See \cite[Proposition 5.4]{Vrana2021} for a recent characterization of such functions.) Besides $\VP(G)$, another notable convex corner associated to graphs is the \emph{theta body} $\THB(G)$ defined by Gr\"otschel, Lov\'asz, and Schrijver \cite{grotschel1986}. It is closely related to the \emph{Lov\'asz number} (or $\vartheta$ function), originally introduced in \cite{lovasz1979} for bounding the Shannon capacity of a graph. 

Much less is known about conditional graph entropy, however. Let us first describe its connection to functional compression.

\subsubsection*{Compression with side information}
Suppose now that the receiver wishes to recover the values $f(X_i,Y_i)$ of some given function $f \colon \Xc \times \Yc \to \Zc$ (with high probability, over long blocks) as depicted in Figure \ref{fig:sideinfo}. Orlitsky and Roche \cite{orlitsky_roche} showed that the minimal rate of information that needs to be transmitted is precisely the conditional graph entropy of the so-called \emph{characteristic graph}, which is defined on the vertex set $\Xc$ as follows: vertices $x_1, x_2 \in \Xc$ are connected with an edge if and only if 
\[%\begin{multline*}
\exists y\in \mathcal{Y} \mbox{ s.t.\ } \big( f(x_1,y) \neq f(x_2,y) \,\, %\\ 
\&\,\, \P(X=x_1,Y=y)>0 \,\,\&\,\, \P(X=x_2,Y=y) > 0 \big).    
\]%\end{multline*}
(This definition goes back to Witsenhausen \cite{Witsenhausen1976}.) We mention that in the special case $f(x,y)=x$, which was already studied in Shannon's classical work \cite{ShannonWeaver49}, the optimal rate is given by the conditional entropy $H(X|Y)=H(X,Y)-H(Y)$. 
\begin{figure}[ht]
\centering
\includegraphics[width=3in]{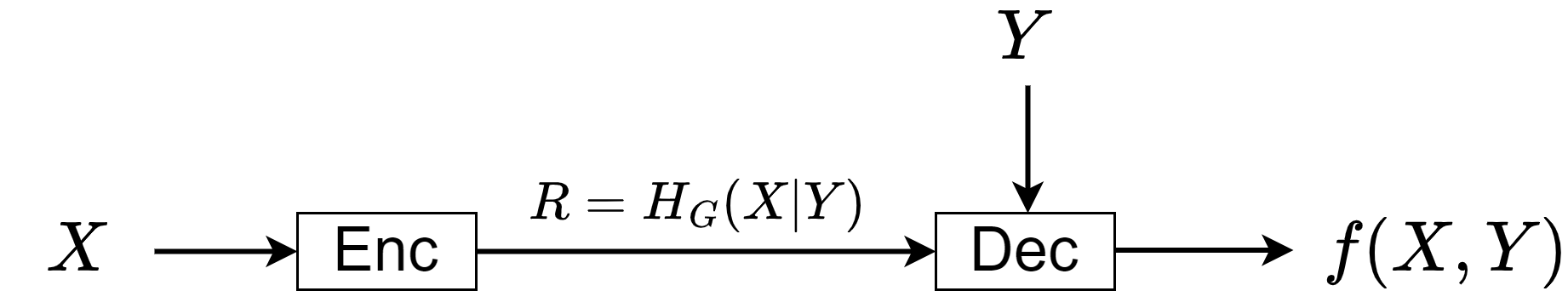}
\caption{The problem of functional compression with side information at the decoder}
\label{fig:sideinfo}
\end{figure}

Also note that these problems are naturally connected to graph coloring. Doshi et al.\ \cite{doshi_et_al} extended the notion of \emph{chromatic entropy} \cite{Alon1996} and defined \emph{conditional chromatic entropy}. They showed that first coloring a sufficiently large power graph %of the characteristic graph 
and then encoding the colors achieves conditional graph entropy. 

\subsubsection*{Alternating optimization}
As we have mentioned, it turns out that conditional graph entropy can be obtained by alternating optimization. This family includes a large number of problems from a variety of fields. A usual feature is that although the optimum has no closed-form expression, there is an efficient way to optimize in each variable. Thus, optimizing in the different variables in turns can lead to a good numerical approximation of the optimum. A well-known example is the expectation–-maximization (EM) algorithm.  Another prominent example is the Blahut--Arimoto (BA) algorithm \cite{Arimoto1972,Blahut1972}, which deals with the capacity of discrete memoryless channels. To find the capacity achieving input, the algorithm turns the objective into a double supremum and alternately optimizes over the distribution parameters from random initialization. Also, \cite{Dupuis2004} provides a numerical method for the Gel'fand-Pinsker problem \cite{GEL1980}, where noncausal state information is known at the encoder. Generalization of the BA algorithm for finite-state channels, proposed in \cite{Vontobel2008}, also takes advantage of the iterative nature of alternating optimization. To the best of our knowledge, no similar procedure was proposed for the Orlitsky--Roche problem \cite{orlitsky_roche} beforehand.

It is important to point out that Csisz\'ar and Tusn\'ady \cite{csiszar_tusnady} initiated the systematic study of such problems in the 1980s already. The cornerstone of their theory is a collection of inequalities called 3-point, 4-point, and 5-point properties. They will play a key role in our problem as well.

\subsection{Notations} \label{sec:notations}
Random variables are denoted by uppercase letters ($X$, $Y$, $J$), while their realizations are denoted by lowercase letters ($x$,$y$,$j$). The (discrete) alphabet of a random variable is denoted by the corresponding script letter ($\Xc$,$\Yc$,$\Jc$). For brevity, we write $\sum_x$ for $\sum_{x \in \Xc}$, and $\sum_y$ for $\sum_{y \in \Yc}$. We use $\P(\cdot)$ to denote the probability of an event, and the following shorthand notations will be used as well: 
\begin{align*} 
p_{x,y} &\defeq \P(X=x,Y=y) ; \\
p_x &\defeq \P(X=x)=\textstyle\sum_y p_{x,y} ; \\
p^y &\defeq \P(Y=y)=\textstyle\sum_x p_{x,y} ; \\
p_{x|y} &\defeq \P(X=x|Y=y)=p_{x,y}/p^y ; \\
p^{y|x} &\defeq \P(Y=y|X=x)=p_{x,y}/p_x . 
\end{align*}

In most settings $j$ denotes a subset of $\Xc$. When a graph $G$ is given on the vertex set $\Xc$, then $j$ always denotes an independent set: $j \subseteq \Xc$ is a set of vertices such that the induced subgraph $G[j]$ contains no edge. In this setting $J$ stands for a random independent set. Then $\Jc$ denotes the set of all independent sets, while $\Jc_x$ is the set of independent sets containing $x$.

\subsection{Contributions} \label{sec:contributions}

\subsubsection*{Alternating minimization}

Let us consider the following optimization problem.
\begin{problem*}
Suppose that we have two finite families of probability measures on a given finite set $\Jc$: $\mu_x$, $x \in \Xc$ and $\nu_y$, $y \in \Yc$. In the first family for each $x \in \Xc$ we have a constraint: the support $\supp \mu_x$ must be contained in a given subset $\Jc_x$ of $\Jc$. Find the measures $\mu_x$, $\nu_y$ that minimize the weighted sum of the Kullback--Leibler divergences: 
\[ \sum_{x,y} p_{x,y} \dkl{\mu_x}{\nu_y} 
\mbox{ for some given weights } p_{x,y} \geq 0 . \]
That is, given $\Jc_x \subseteq \Jc$, $x \in \Xc$ and $p_{x,y} \geq 0$, $x \in \Xc, y \in \Yc$, find the minimum of the above sum under the constraint $\supp \mu_x \subseteq \Jc_x$. 
\end{problem*}

In our setting we have random variables $X$ and $Y$ taking values in the finite sets $\Xc$ and $\Yc$, respectively, and $G$ is a graph on the vertex set $\Xc$. Then by $j$ we denote an independent set of $G$, hence each $j$ is a subset of $\Xc$. We choose $\Jc$ to be the set of all $j$, while 
\[ \Jc_x \defeq \{ j \, : \, x \in j \} \]
consists of the independent sets containing a fixed $x$. With this setup and with $p_{x,y} \defeq \P(X=x,Y=y)$, the minimum of the problem above turns out to be precisely $H_G(X|Y)$.

To get concrete formulas, let us represent the distributions $\mu_x$ and $\nu_y$ by the following vectors: 
\begin{align*} 
\qb &= \big( q_{j|x} \big)_{(j,x) \in \Jc \times \Xc} 
\in \Rb^{\Jc \times \Xc}; \\
\rb &= \big( r_{j|y} \big)_{(j,y) \in \Jc \times \Yc} 
\in \Rb^{\Jc \times \Yc} ,
\end{align*} 
where $q_{j|x}$ and $r_{j|y}$ stand for $\mu_x(\{j\})$ and $\nu_y(\{j\})$, respectively.\footnote{We index the coordinates/variables by $j|x$ and $j|y$ to emphasize the fact that they express certain conditional probabilities, see the proof of Proposition \ref{prop:min_phiq} for details. This notation may also serve as a reminder that $q_{j|x}$ and $r_{j|y}$ have to sum up to $1$ for any fixed $x$ and $y$, respectively.} The constraints for $\qb$ and $\rb$ lead to the following definition.
\begin{definition} \label{def:polytopes}
We define the convex polytopes $\Kq \subset \Rb^{\Jc \times \Xc}$ and $\Kr \subset \Rb^{\Jc \times \Yc}$ as
\[%\begin{multline*} 
\Kq \defeq \bigg\{ \qb= \big( q_{j|x} \big) \, : \, 
q_{j|x} \geq 0; \, %\\ 
\sum_{j \ni x} q_{j|x} = 1 \, (\forall x\in\Xc); \, 
q_{j|x}=0 \mbox{ if } x \notin j \bigg\} 
\]%\end{multline*}
and
\[
\Kr \defeq \bigg\{ \rb= \big( r_{j|y} \big) \, : \, 
r_{j|y} \geq 0; \, 
\sum_{j} r_{j|y} = 1 \, (\forall y\in\Yc) \bigg\} .
\]
By $\ints(\Kq)$ and $\ints(\Kr)$ we denote the relative interiors of the polytopes (within their affine hull).
\end{definition}
In the sequel we will always assume that $\qb \in \Kq$ and $\rb \in \Kr$. Then 
\[ \dkl{\mu_x}{\nu_y} = \sum_j q_{j|x} \log \frac{q_{j|x}}{r_{j|y}} .\]
Therefore we need to minimize the function
\begin{equation} \label{eq:phi_intro}
\varphi(\qb,\rb) \defeq \sum_{x,y,j} 
p_{x,y} \, q_{j|x} \log \frac{q_{j|x}}{r_{j|y}} 
\end{equation}
over $\qb \in \Kq$ and $\rb \in \Kr$. 

As we will see, this is an alternating minimization problem. The point is that if we fix one of the two variables $\qb$ and $\rb$, then there are explicit formulas for the optimal choice of the other variable: we will define maps 
\[ Q \colon \Kr \to \Kq \mbox{ and } R \colon \Kq \to \Kr \]
such that $\rb=R(\qb)$ is the optimal choice for a fixed $\qb$, and similarly $\qb=Q(\rb)$ is optimal for a fixed $\rb$; that is, for any $\qb$ and $\rb$ we have 
\[
\varphi( \qb, \rb) \geq \varphi\big( \qb, R(\qb) \big) \mbox{ and }
\varphi( \qb, \rb) \geq \varphi\big( Q(\rb), \rb \big) .
\]
Using $Q$ and $R$ we can explicitly define the following functions: 
\begin{align*}
\phiq (\qb) &\defeq \varphi\big( \qb, R(\qb) \big) = \min_{\rb} \varphi(\qb,\rb) \mbox{ and}\\ 
\phir (\rb) &\defeq \varphi\big( Q(\rb), \rb \big) = \min_{\qb} \varphi(\qb,\rb) ,
\end{align*}
They clearly have the same minimum as $\varphi$. When we work out the details in Section \ref{sec:alt_min_prob}, we will see that the q-problem $\min \phiq$ is actually equivalent to the original formula \eqref{eq:mut_inf_cond} for conditional graph entropy (see Proposition \ref{prop:min_phiq}).
\begin{theorem} \label{thm:cond_graph_entropy}
We have the following formulas for conditional graph entropy:
%\[ H_G(X|Y) = \min_{\qb,\rb} \varphi(\qb,\rb) 
%= \min_{\qb} \phiq(\qb) = \min_{\rb} \phir(\rb) .\]
\[ H_G(X|Y) = \min_{\Kq \times \Kr} \varphi 
= \min_{\Kq} \phiq = \min_{\Kr} \phir .\]
\end{theorem}

\subsubsection*{Algorithm} 

When trying to find the minimum of $\varphi(\qb,\rb)$, the fact that we can easily optimize in either variable (while the other is fixed) gives rise to the following simple iterative algorithm. Let us start from a point $\qb^{(0)}$ and apply $R$ and $Q$ alternately:
\begin{equation} \label{eq:alt_seq}
\qb^{(0)} \Rto \rb^{(0)} \Qto \qb^{(1)} \Rto \rb^{(1)} \Qto 
\qb^{(2)} \Rto \rb^{(2)} 
%\Qto 
\cdots .
\end{equation}
The corresponding $\varphi$-value decreases at each step:
\begin{center}
\begin{tabular}{ccc}
$\varphi\big( \qb^{(0)}, \rb^{(0)}\big)$ & $=$ & $\phiq( \qb^{(0)} )$ \\
\gequp &  & \gequp \\
$\varphi\big( \qb^{(1)}, \rb^{(0)}\big)$ & $=$ & $\phir( \rb^{(0)} )$ \\
\gequp &  & \gequp \\
$\varphi\big( \qb^{(1)}, \rb^{(1)}\big)$ & $=$ & $\phiq( \qb^{(1)} )$ \\
\gequp &  & \gequp \\
$\varphi\big( \qb^{(2)}, \rb^{(1)}\big)$ & $=$ & $\phir( \rb^{(1)} )$ \\
\gequp &  & \gequp \\
$\varphi\big( \qb^{(2)}, \rb^{(2)}\big)$ & $=$ & $\phiq( \qb^{(2)} )$ \\
$\vdots$ &  & $\vdots$
\end{tabular} 
\end{center}
%
%\begin{multline*}  
%\underbrace{\varphi\big( \qb^{(0)}, \rb^{(0)}\big) }_{=\phiq( \qb^{(0)} )} 
%\geq  
%\underbrace{\varphi\big( \qb^{(1)}, \rb^{(0)}\big) }_{=\phir( \rb^{(0)} )} 
%\geq 
%\underbrace{\varphi\big( \qb^{(1)}, \rb^{(1)}\big) }_{=\phiq( \qb^{(1)} )} \\
%\geq  
%\underbrace{\varphi\big( \qb^{(2)}, \rb^{(1)}\big) }_{=\phir( \rb^{(1)} )} 
%\geq 
%\underbrace{\varphi\big( \qb^{(2)}, \rb^{(2)}\big) }_{=\phiq( \qb^{(2)} )} 
%\geq  
%\cdots . 
%\end{multline*} 
%
One can also think of this alternating optimization as ``jumping'' between the q-problem $\min_{\Kq} \phiq$ and the r-problem $\min_{\Kr} \phir$ using the maps $Q \colon \Kr \to \Kq$ and $R \colon \Kq \to \Kr$. The value to minimize (i.e., the $\phiq$-value and the $\phir$-value, respectively) always decreases, so with each step we get closer to the optimum.

Following the footsteps of the general theory of Csisz\'ar and Tusn\'ady \cite{csiszar_tusnady}, we will show that, for an arbitrary starting point $\qb^{(0)}$ in the relative interior $\ints(\Kq)$, the iterative process converges to the minimum.
\begin{theorem} \label{thm:convergence}
For an arbitrary starting point $\qb^{(0)} \in \ints(\Kq)$ consider the sequence \eqref{eq:alt_seq} obtained by alternating optimization. Then $\varphi\big( \qb^{(n)}, \rb^{(n)}\big)$ is a decreasing sequence that converges to $\ds \min_{\Kq \times \Kr} \varphi$ %= H_G(X|Y)$ 
as $n \to \infty$.
\end{theorem}
%
%This is true somewhat more generally: $\Jc$ can be any set system over $\Xc$, not necessarily the set of independent sets of some graph $G$.
%
We implemented the algorithm in Python and made the codes publicly available in a GitHub repository \cite{ge_github}. 

\subsubsection*{Convex corners}
As we have mentioned, the q-problem $\min \phiq$ gives back the original formula \eqref{eq:mut_inf_cond}. On the other hand, the r-problem $\min \phir$ gives us a new formula. We make this new formula explicit in the next theorem because it shows how conditional graph entropy is related to convex corners (a known phenomenon in the unconditioned case).
\begin{theorem} \label{thm:new_formula}
For any (maximal) independent set $j$ of $G$ and any possible value $y$ of $Y$ we have a variable $r_{j|y}$. Then conditional graph entropy can be expressed as the solution of the following optimization problem:
\begin{equation} \label{eq:new}
H_G(X|Y) = \min - \sum_x p_x%\P(X=x) 
\log \bigg( \sum_{j \, : \, x \in j} \, \prod_y \big( r_{j|y} \big)^{
p^{y|x}}
%\P(Y=y|X=x)} 
\bigg) ,
\end{equation}
where the minimum is taken over all choices of $r_{j|y} \geq 0$ satisfying $\sum_j r_{j|y} = 1$ for each fixed $y$.
\end{theorem}

We can easily turn this new r-problem into another one (that we will call the a-problem) which attests that conditional graph entropy is a special case of a more general entropy notion defined for convex corners.\footnote{A convex corner of $\Rb^\Xc$ is a convex compact set in the positive orthant $[0,\infty)^\Xc$ that is \emph{downward closed}, i.e., if we take any point in the set and decrease some of its coordinates, then the new point still lies in the set (see Definition \ref{def:convex_corner}).} 

To see this connection, note that \eqref{eq:new} is in the form 
\[ H_G(X|Y) = \min_{\Kr} - \sum_x p_x %\P(X=x) 
\log A_x ,\]
where $A_x$ is a function of the variables $r_{j|y}$. The key property is that $A_x \colon \Kr \to [0,1]$ is a concave function for each $x$. It means that the set of image points $\ab = ( a_x )_{x \in \Xc}$ with $a_x = A_x(\rb)$, as $\rb$ ranges over $\Kr$, (essentially) defines a convex corner $\Ka$ in $\Rb^\Xc$. Then we have 
\[%\begin{multline*}
H_G(X|Y) = \min_{\Ka} \phia %\\
\mbox{, where } \phia( \ab ) = - \sum_x p_x %\P(X=x) 
\log a_x .
\]%\end{multline*}
A nice feature of this a-problem is that the minimum is attained at a single point $\ab \in \Ka$ because $\phia$ is strictly convex (provided that $p_x=\P(X=x)>0$ for each $x$). Also note that $\phia$ depends only on the distribution of $X$, while the convex corner $\Ka$ depends only on the graph $G$ and the conditional distributions $Y \, | \, X=x$ for any given $x$. Thus, the parameters of the problem are, so to say, split between $\phia$ and $\Ka$. 

Moreover, we will define another convex corner, denoted by $\La$, that can be regarded as the \emph{dual problem}. To keep the introduction concise, we will postpone the actual definition of $\La$ and the precise statements until Section \ref{sec:dual}. In short, we will show that  
\[ H_G(X|Y) = \min_{\Ka} \phia = - \min_{\La} \phia ,\] 
and a vector $\ab = \big( a_x \big)_{x \in \Xc} \in \Ka$ is optimal (i.e., the minimum point of $\phia$) if and only if $\ab^{-1} \defeq \big( a_x^{-1} \big)_{x \in \Xc} \in \La$. This provides a fairly simple way to check optimality, and even leading to an error bound for our iterative algorithm as we will explain in Section \ref{sec:algorithm}. 

In the special case when $Y$ is trivial, i.e., $\Yc$ is a one-element set, we get back K\"orner's original setting of graph entropy, and things simplify considerably. For example, $\Ka$ is simply a polytope: the aforementioned \emph{vertex packing polytope} (or \emph{independent/stable set polytope}) $\VP(G)$. 
We did not find any mention in the literature of the fact that graph entropy can be considered as an alternating minimization problem. In particular, to the best of our knowledge, the corresponding iterative algorithm has not been used or proposed before even in this unconditioned setting.

\subsection*{Outline of the paper}
We give more details of our alternating minimization problem in Section \ref{sec:alt_min_prob}, and collect its key properties in Section \ref{sec:convergence}, proving, in particular, the convergence of the iterative process. In Section \ref{sec:corner} we discuss the connection to convex corners and introduce the dual problem. In Section \ref{sec:algorithm} we discuss some details of the iterative algorithm; in particular, an error bound based on the dual problem and a tweak for speeding the convergence up.

%%%%%%%%%%%%%%%%%%%%%%%%%%%%%%%%%%%%%%%%%%%%%%%%%%%%%%%%%%%%%%%%%
\section{The alternating minimization problem} \label{sec:alt_min_prob}

In this section we rigorously introduce the optimization problem $\min \varphi(\qb,\rb)$ described in the introduction. We will use the notations outlined in Section \ref{sec:notations}. 

\subsection{Assumptions}
For the sake of simplicity, we will work under the following three assumptions that do not actually reduce generality. %but they will make some of our arguments simpler.
\begin{itemize}
\item \textbf{Each $p_x=\textstyle\sum_y p_{x,y}$ and $p^y=\textstyle\sum_x p_{x,y}$ is strictly positive}. (Otherwise we simply delete the corresponding elements from $\Xc$ and $\Yc$.) Note that under this assumption the conditional probabilities $p_{x|y}$ and $p^{y|x}$ all exist.
\item \textbf{The sets $j$ cover $\Xc$}, that is, $\forall x \in \Xc \, \exists j \in \Jc$ s.t.\ $x \in j$. (Otherwise the minimum we will consider would be $\infty$ anyway.) 
\item \textbf{$\Jc$ contains inclusion-wise maximal sets}. (Removing subsets of other sets from $\Jc$ does not change the minimum.)
\end{itemize}

Also, all the results will be true under the more general setting when $\Jc \subseteq \Pc(\Xc)$ is any set of subsets of $\Xc$. That is, the sets $j \in \Jc$  are subsets of $\Xc$ but they do not necessary need to be independent sets of some graph $G$ on $\Xc$. In conclusion, our setup essentially has the following fixed parameters: the probabilities $p_{x,y}$ and a binary relation $\in$ on $\Xc \times \Jc$: whenever $x$ is in the set $j$ we write $x \in j$.\footnote{Equivalently, we may write $j \ni x$. In particular, $\sum_{j \ni x}$ means that the sum runs over sets $j \in \Jc$ containing the (fixed) element $x$.} 

\subsection{The mappings}
Recall the convex polytopes $\Kq$ and $\Kr$ defined in Section \ref{sec:contributions} of the introduction. Now we explicitly define the mappings $Q$ and $R$ between these polytopes along with an auxiliary mapping $A$. In fact, the formula defining $Q$ will make sense only on the subset $\Krst$ where none of the coordinates of $A$ vanishes. We define $Q$ arbitrarily outside $\Krst$. 

\begin{definition} \label{def:AQR}
We define the mappings
$A \colon \Kr \to \Rb^{\Xc}$; 
$Q \colon \Kr \to \Kq \subset \Rb^{\Jc \times \Xc}$; 
$R \colon \Kq \to \Kr \subset \Rb^{\Jc \times \Yc}$
by the following coordinate-wise functions $Q_{j|x}$, $R_{j|y}$, $A_x$: 
\begin{align*}
R_{j|y}(\qb) &\defeq \sum_{x \in j} p_{x|y} \, q_{j|x} ;\\
A_x(\rb) &\defeq \sum_{j \ni x} \, \prod_y \big( r_{j|y} \big)^{p^{y|x}} ;\\
Q_{j|x}(\rb) &\defeq 
\begin{cases}
0  & \mbox{ if } x \notin j ;\\
\prod_y \big( r_{j|y} \big)^{p^{y|x}} \bigg/ A_x(\rb) & \mbox{ if } x \in j .
\end{cases} 
\end{align*}
Since the formula for $Q_{j|x}$ involves a division by $A_x$, it only defines $Q$ over the subset
\begin{equation} \label{def:Krst}
\Krst \defeq \Kr \setminus \bigcup_x A^{-1}_x(0) .    
\end{equation}
For $\rb \in \Kr \setminus \Krst$ let $Q(\rb)$ be an arbitrary point in $\ints(\Kq)$.

In the formulas above we define $t^0=1$ even for $t=0$. This ensures that $A$ is continuous over the entire $\Kr$ even when $p^{y|x}=0$ for some pairs $x,y$. It is also consistent with the convention $0 \cdot \log 0 = 0$ which is implicit in the definition of Shannon entropy.
\end{definition}
It is straightforward to check that $Q(\rb) \in \Kq$ and $R(\qb) \in \Kr$ always hold. % even $R(\qb) \in \Krst$
For example, in the definition of $Q_{j|x}(\rb)$, dividing by $A_x(\rb)$ ensures that their sum is $1$ for any fixed $x$. 

\begin{remark}
Note that $R$ is a linear map and it actually describes how the conditional distributions $J \, | \, Y=y$ can be expressed in terms of $J \, | \, X=x$ in a Markov chain $Y - X - J$; see the proof of Proposition \ref{prop:min_phiq} for details. 
\end{remark}

\subsection{The functions}
Now we can turn our attention to the functions to be minimized. We already gave an explicit formula \eqref{eq:phi_intro} for $\varphi(\qb,\rb)$ in the introduction. However, we did not mention a few subtleties there. In particular, we need to specify the function values when some of the variables $q_{j|x}$ or $r_{j|y}$ are $0$. 
\begin{definition} \label{def:phi}
For $u,v \in [0,1]$ let 
\[ f(u,v) \defeq u \log u - u \log v \]
with the usual conventions $\log 0 = - \infty$ and $0 \cdot \infty = 0$ so that 
\[ f(0,v)=0 \mbox{ if } v \in [0,1] 
\quad \mbox{ and } \quad
f(u,0)=\infty \mbox{ if } u \in (0,1]. \]
Then 
\begin{equation} \label{eq:phi}
\varphi(\qb,\rb) \defeq \sum_{x,y,j} p_{x,y} \, f\big( q_{j|x}, r_{j|y} \big) \end{equation}
is well-defined for any $\qb \in \Rb^{\Jc \times \Xc}$ and $\rb \in \Rb^{\Jc \times \Yc}$. Note that we may restrict the sum for $x \in j$ because otherwise $q_{j|x}=0$, and hence the summand is $0$ anyway.

Let us also define the following auxiliary function that we will need for establishing the so-called 3-point and 4-point properties. 
%
%\begin{multline} \label{eq:delta}
\begin{equation} \label{eq:delta}
\delta(\qb,\qprb) \defeq \sum_{x} p_x \, \sum_{j \ni x} f\big( q_{j|x}, q'_{j|x} \big) %\\
= \sum_{x} p_x \, \sum_{j \ni x} q_{j|x} \log q_{j|x} - q_{j|x} \log q'_{j|x} .
\end{equation}
%\end{multline}
%
\end{definition}

One may think of $\varphi(\qb,\rb)$ and $\delta(\qb,\qprb)$ as (non-symmetric) squared distances between these points. We mention that both functions are convex combinations of certain Kullback--Leibler divergences. 
In particular, they are nonnegative and they may be $\infty$. For example, $\varphi(\qb,\rb)=\infty$ if and only if there exist $x,y,j$ such that $r_{j|y}=0$ while $p_{x,y}>0$ and $q_{j|x}>0$. %(the last one can, of course, only happen if $x \in j$). 
It is easy to see that if $\rb \in \Kr \sm \Krst$, then $\varphi(\qb,\rb)=\infty$ for any choice of $\qb \in \Kq$. (The two-line proof of this fact is included in the proof of Proposition \ref{prop:3pt}.)

We include here two useful equivalent formulas for $\varphi$. On the one hand, summing $q_{j|x} \log q_{j|x}$ and $q_{j|x} \log r_{j|y}$ separately gives 
%
%\begin{multline} \label{eq:phi_with_R}
\begin{equation} \label{eq:phi_with_R}
\varphi(\qb,\rb) = \sum_x p_x \sum_{j \ni x} q_{j|x} \log q_{j|x} %\\
- \sum_y p^y \sum_j R_{j|y}(\qb) \log r_{j|y} .
\end{equation}
%\end{multline}
%
On the other hand, for $\rb \in \Krst$ we can write 
\begin{multline} \label{eq:phi_with_QA}
\varphi(\qb,\rb) = \sum_x p_x \sum_{j \ni x} q_{j|x} 
\bigg( \log q_{j|x} - \sum_y p^{y|x} \log r_{j|y} \bigg) \\
= \sum_x p_x \sum_{j \ni x} q_{j|x} \log 
\frac{q_{j|x}}{\prod_y \big( r_{j|y} \big)^{p^{y|x}}} %\\ 
= \sum_x p_x \sum_{j \ni x} q_{j|x} \log 
\frac{q_{j|x}}{Q_{j|x}(\rb) A_x(\rb)} 
\end{multline}
with the remark that if $q_{j|x}$ and $Q_{j|x}(\rb)$ are both $0$, then the fraction in the $\log$ should simply be $1/A_x(\rb)$.

Next we define $\phiq$ and $\phir$. At this point we simply express them using $Q$, $R$, and $\varphi$, but we will shortly see that they are indeed the minimum of $\varphi$ with one of the variables fixed. Using \eqref{eq:phi_with_R} and \eqref{eq:phi_with_QA} we get the following specific formulas: for $\qb \in \Kq$ and $\rb \in \Kr$ let 
%
%\begin{multline} \label{eq:phiq}
\begin{equation} \label{eq:phiq}
\phiq(\qb) \defeq \varphi\big(\qb,R(\qb)\big) 
= \sum_x p_x \sum_{j \ni x} q_{j|x} \log q_{j|x} %\\
- \sum_y p^y \sum_j R_{j|y}(\qb) \log R_{j|y}(\qb) ;
\end{equation}
%\end{multline}
%
%\begin{multline} \label{eq:phir}
\begin{equation} \label{eq:phir}
\phir(\rb) \defeq \varphi\big(Q(\rb),\rb\big) 
= - \sum_x p_x \log A_x(\rb) %\\
=  - \sum_x p_x \log \sum_{j \ni x} \, \prod_y \big( r_{j|y} \big)^{p^{y|x}} .  
\end{equation}
%\end{multline}
%
Note that \eqref{eq:phir} works even for $\rb \notin \Krst$ as all the expressions are $\infty$ in that case.

\begin{proposition} \label{prop:min_phiq}
If $\Jc$ is the set of independent sets of some graph $G$ on the vertex set $\Xc$, then 
\[ H_G(X|Y) = \min_{\Kq} \phiq .\]
\end{proposition}
\begin{proof}
Recall that the original formula \eqref{eq:mut_inf_cond} for $H_G(X|Y)$ involves minimization over random $J$ containing $X$ and independent from $Y$ when conditioned on $X$ (in other words, $Y-X-J$ is a Markov chain). To define such a $J$ one needs to specify the conditional probabilities $\P(J=j|X=x)$ whenever $x \in j$. These conditional probabilities can be represented by a vector $\qb \in \Kq$. Due to the conditional independence, we have the expression 
\begin{equation} \label{eq:cond_prob_expr}
\P(J=j | Y=y) = \sum_{x \in j} p_{x|y} \, \P(J=j | X=x) .
\end{equation}
Note that we defined $R$ using the same linear combinations, see Definition \ref{def:AQR}. Consequently, if the $\P(J=j|X=x)$'s are represented by $\qb$, then the $\P(J=j|Y=y)$'s are represented by $\rb=R(\qb)$. Therefore
\begin{align*}
H(J|X) &= - \sum_x p_x \sum_{j \ni x} q_{j|x} \log q_{j|x} ;\\ 
H(J|Y) &= - \sum_y p^y \sum_j R_{j|y}(\qb) \log R_{j|y}(\qb) ,
\end{align*}
and hence the conditional mutual information $I(X;J \, | \, Y) = H(J|Y)-H(J|X)$ is precisely $\phiq(\qb)$ according to \eqref{eq:phiq}, proving $H_G(X|Y)=\min_{\Kq} \phiq$. 
\end{proof}

%%%%%%%%%%%%%%%%%%%%%%%%%%%%%%%%%%%%%%%%%%%%%%%%%%%%%%%%%%%%
\section{Convergence} \label{sec:convergence}

In this section we derive various properties of the the minimization problems introduced in Section \ref{sec:alt_min_prob}. They will culminate in the proof that alternating optimization converges to the true minimum (Theorem \ref{thm:convergence}). We will also prove Theorems \ref{thm:cond_graph_entropy} and \ref{thm:new_formula} along the way. 
\begin{proposition} \label{prop:phi_delta}
The functions $\varphi$ and $\delta$ are nonnegative, lower semicontinuous, and convex. Moreover, $\delta(\qb,\qprb)=0$ if and only if $\qb=\qprb$.
\end{proposition}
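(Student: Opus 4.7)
My plan is to reduce every claim to properties of the scalar building block $f(u,v) = u\log u - u\log v$, since both $\varphi$ and $\delta$ are nonnegative linear combinations of values of $f$ evaluated at coordinate pairs.

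For convexity and lower semicontinuity I would work directly with $f$. Joint convexity of $f$ on $[0,1]^2$ is quickest to see by recognising it as the perspective $f(u,v)=v\,g(u/v)$ of the convex function $g(t)=t\log t$ (with $g(0)\defeq 0$); the conventions $f(0,v)=0$ and $f(u,0)=+\infty$ for $u>0$ coincide with the natural lsc extension of this perspective. Lower semicontinuity is immediate on $\{v>0\}$, where $f$ is actually continuous, and at points $(u_0,0)$ with $u_0>0$ where $f\to +\infty$ matches the assigned value. The only corner requiring a moment's thought is $(0,0)$: the elementary bound $|t\log t|\leq 1/e$ on $[0,1]$ gives $|f(u,v)|\leq v/e$ whenever $u\leq v$, while $f(u,v)\geq 0$ whenever $u\geq v$, so $\liminf_{(u,v)\to(0,0)} f(u,v)\geq 0 = f(0,0)$. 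Since convexity and lsc are preserved by nonnegatively weighted finite sums, both properties transfer to $\varphi$ and $\delta$.

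For nonnegativity I would invoke Gibbs' inequality. Fix $x\in\Xc$ and $y\in\Yc$: the constraint $\qb\in\Kq$ gives $\sum_j q_{j|x}=1$ (using $q_{j|x}=0$ whenever $x\notin j$), and $\rb\in\Kr$ gives $\sum_j r_{j|y}=1$, so the inner sum $\sum_j f(q_{j|x},r_{j|y})$ is exactly a Kullback--Leibler divergence $\dkl{\mu_x}{\nu_y}\geq 0$, vanishing iff $\mu_x=\nu_y$. Weighting by $p_{x,y}\geq 0$ and summing gives $\varphi\geq 0$; the identical argument with $q'_{j|x}$ in place of $r_{j|y}$ (both probability distributions over $j\ni x$) gives $\delta\geq 0$.

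For the uniqueness clause, the implication $\qb=\qprb \Rightarrow \delta=0$ is immediate since $f(u,u)=0$. Conversely, the standing assumption $p_x>0$ for every $x$ upgrades $\delta(\qb,\qprb)=0$ to the vanishing of each inner KL-term, forcing $q_{j|x}=q'_{j|x}$ for all $j\ni x$; since both vectors lie in $\Kq$, the remaining coordinates with $x\notin j$ are jointly zero, so $\qb=\qprb$. The only mildly subtle step in the whole proof is the lsc verification at the corner $(0,0)$, but the one-line bound on $t\log t$ disposes of it cleanly; everything else is routine.
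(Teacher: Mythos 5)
Your proof is correct and follows essentially the same reduction as the paper: pass everything to the scalar building block $f(u,v)=u\log u-u\log v$ and use that $\varphi$ and $\delta$ are nonnegative linear combinations of its values. A few points of comparison are worth noting. For nonnegativity, the paper's primary argument is Jensen's inequality for the convex $f$, giving $\sum_j f(q_{j|x},r_{j|y})\geq |\Jc|\,f(1/|\Jc|,1/|\Jc|)=0$; you instead read the inner sum directly as a KL-divergence and invoke Gibbs' inequality, which the paper mentions only parenthetically as an alternative. Both arguments tacitly use the normalizations $\sum_j q_{j|x}=1$ and $\sum_j r_{j|y}=1$, so the conclusion is about $\varphi$ and $\delta$ on $\Kq\times\Kr$ and $\Kq\times\Kq$ respectively, which is the intended domain. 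For convexity and lower semicontinuity of $f$, the paper simply calls them ``well known''; you actually supply a proof via the perspective construction $f(u,v)=v\,g(u/v)$ with $g(t)=t\log t$, together with a careful check at the corner $(0,0)$ using the bound $|t\log t|\leq 1/e$ — that corner is indeed the only place where lsc is not immediate, and your argument handles it correctly. Finally, you also prove the ``moreover'' clause ($\delta(\qb,\qprb)=0$ iff $\qb=\qprb$) which the paper's proof omits entirely; your argument correctly uses the standing assumption $p_x>0$ to localize to each $x$, the equality case of Gibbs for distributions supported on $\{j:x\in j\}$, and the constraint $q_{j|x}=0$ for $x\notin j$ to cover the remaining coordinates. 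So your write-up is, if anything, a more complete version of the paper's proof.
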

\begin{proof}
Recall that $\varphi$ and $\delta$ were defined using the function $f \colon [0,1]^2 \to (-\infty,\infty]$ in Definition \ref{def:phi}. It is well known and easy to show that $f$ is convex and lower semicontinuous, and hence so are $\varphi$ and $\delta$.  

Using the convexity of $f$ we get that for any fixed $x,y$: 
\[%\begin{multline*}
\mbox{both } \sum_{j} f\big( q_{j|x} , r_{j|y} \big) 
\mbox{ and } \sum_{j} f\big( q_{j|x} , q'_{j|x} \big) %\\ 
\geq |\Jc| \, f\big( 1/|\Jc| , 1/|\Jc| \big) = 0 
\]%\end{multline*}
showing that $\varphi, \delta \geq 0$. (This, of course, also follows from their representations as the sum of Kullback--Leibler divergences.) 
\end{proof}

Note that lower semicontinuity implies that $\varphi$ attains its minimum over any compact set. In particular, it has a minimum over $\Kq \times \Kr$. 

\begin{proposition}[$\rb=R(\qb)$ is optimal for fixed $\qb$] \label{prop:R_ineq}
We have 
\[ \varphi(\qb,\rb) \geq \varphi\big( \qb, R(\qb) \big) = \phiq(\qb)  
\mbox{ for any } \qb \in \Kq; \rb \in \Kr . \] 
Equality holds if and only if $\rb=R(\qb)$. 
\end{proposition}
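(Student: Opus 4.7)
The plan is to exploit the decomposition \eqref{eq:phi_with_R}, in which the first sum depends only on $\qb$ and the $\rb$-dependence is concentrated in the single term $-\sum_y p^y \sum_j R_{j|y}(\qb) \log r_{j|y}$. With $\qb$ fixed, minimizing $\varphi(\qb,\rb)$ over $\rb \in \Kr$ therefore reduces to maximizing $\sum_j R_{j|y}(\qb) \log r_{j|y}$ separately for each $y$ (the slices decouple because the constraints defining $\Kr$ act on each $y$-slice independently). This is a textbook application of Gibbs' inequality, provided we first observe that $\big( R_{j|y}(\qb) \big)_{j \in \Jc}$ is itself a probability distribution on $\Jc$.

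I would verify the latter by swapping the order of summation:
\[ \sum_j R_{j|y}(\qb) = \sum_j \sum_{x \in j} p_{x|y} \, q_{j|x} = \sum_x p_{x|y} \sum_{j \ni x} q_{j|x} = \sum_x p_{x|y} = 1, \]
using the defining constraint $\sum_{j \ni x} q_{j|x}=1$ of $\Kq$ and the fact that the conditional probabilities $p_{x|y}$ sum to $1$ in $x$; nonnegativity is immediate. Gibbs' inequality (equivalently, the nonnegativity of $\dkl{R_{\cdot|y}(\qb)}{r_{\cdot|y}}$) then yields
\[ -\sum_j R_{j|y}(\qb) \log r_{j|y} \geq -\sum_j R_{j|y}(\qb) \log R_{j|y}(\qb), \]
with equality if and only if $r_{j|y} = R_{j|y}(\qb)$ for every $j$. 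Multiplying each $y$-slice by $p^y > 0$, summing over $y$, and adding back the $\qb$-only term of \eqref{eq:phi_with_R} produces $\varphi(\qb,\rb) \geq \varphi\big(\qb, R(\qb)\big) = \phiq(\qb)$, with equality exactly when $\rb = R(\qb)$.

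The only point that deserves care is the handling of the $\pm\infty$ conventions from Definition \ref{def:phi}. If $r_{j|y}=0$ while $R_{j|y}(\qb)>0$ for some $(j,y)$ with $p^y>0$, then both sides of the corresponding $y$-slice are $+\infty$, and one can check that $\varphi(\qb,\rb)=+\infty$ as well (the condition $R_{j|y}(\qb)>0$ forces some $x\in j$ with $p_{x,y}>0$ and $q_{j|x}>0$, whence $p_{x,y} f(q_{j|x},0)=+\infty$), so the inequality is trivially strict. If instead $R_{j|y}(\qb)=0$, the term contributes $0$ to the right-hand side regardless of $r_{j|y}$, yet the Gibbs equality condition still forces $r_{j|y}=0$: any positive mass outside the support of $R_{\cdot|y}(\qb)$ pushes $\sum_{j: R_{j|y}(\qb)>0} r_{j|y}$ strictly below $1$ and yields strict inequality via Jensen applied to $-\log$. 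I do not anticipate any genuine obstacle beyond this bookkeeping; the substance of the proof is a one-line appeal to Gibbs once the decomposition \eqref{eq:phi_with_R} is in hand.
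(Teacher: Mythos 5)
Your proof is correct and follows essentially the same route as the paper: exploit the decomposition \eqref{eq:phi_with_R}, observe that each $y$-slice decouples and that $\big(R_{j|y}(\qb)\big)_j$ is a probability distribution on $\Jc$, and invoke Gibbs' inequality per $y$. The paper leaves the verification that $R(\qb)\in\Kr$ and the $\pm\infty$ bookkeeping implicit; you have simply spelled out those details, and both are handled correctly.
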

\begin{proof}
Using formula \eqref{eq:phi_with_R} for a fixed $\qb$, it immediately follows from Gibbs' inequality (applied for each $y$ in the second sum) that the unique optimal choice for $\rb$ is $R(\qb)$.
\end{proof}
\begin{proposition}[$\qb=Q(\rb)$ is optimal for fixed $\rb$] \label{prop:Q_ineq}
We have 
\[ \varphi(\qb,\rb) \geq \varphi\big( Q(\rb), \rb \big) = \phir(\rb) 
\mbox{ for any } \qb \in \Kq; \rb \in \Kr . \] 
If $\rb \notin \Krst$, then both sides are $\infty$. Furthermore, for $\rb \in \Krst$ equality holds if and only if $\qb=Q(\rb)$. 
\end{proposition}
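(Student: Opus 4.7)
The plan is to exploit formula \eqref{eq:phi_with_QA} from the setup, which already rewrites $\varphi(\qb,\rb)$ in a way that isolates $Q_{j|x}(\rb)$ and $A_x(\rb)$. The main identity I want to establish is
\[ \varphi(\qb,\rb) = \phir(\rb) + \delta\big(\qb, Q(\rb)\big) \quad \text{for } \rb \in \Krst, \]
which is precisely the identity teased at the end of the ``optimization problems'' subsection. Once this is in hand, the proposition follows immediately: $\delta \geq 0$ gives the inequality, and $\delta(\qb, Q(\rb)) = 0$ iff $\qb = Q(\rb)$ (already recorded in the preceding proposition on $\delta$) gives the equality case.

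First I would dispose of the degenerate case $\rb \in \Kr \setminus \Krst$. By the definition of $\Krst$, some $A_x(\rb) = 0$, so $\phir(\rb) = -\sum_x p_x \log A_x(\rb) = \infty$ since $p_x>0$. For the left-hand side: $A_x(\rb) = 0$ means that for every $j \ni x$ there is some $y$ with $r_{j|y}=0$ and $p^{y|x}>0$. Since $\qb \in \Kq$ satisfies $\sum_{j \ni x} q_{j|x}=1$, at least one $q_{j|x}>0$; for that $j$ we can find a $y$ with $r_{j|y}=0$ but $p_{x,y}=p_x p^{y|x}>0$ and $q_{j|x}>0$, so the summand $p_{x,y} f(q_{j|x},r_{j|y}) = \infty$ and hence $\varphi(\qb,\rb)=\infty$ as well.

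For $\rb \in \Krst$, I apply \eqref{eq:phi_with_QA} and split the logarithm:
\[ \varphi(\qb,\rb) = \sum_x p_x \sum_{j \ni x} q_{j|x} \log \frac{q_{j|x}}{Q_{j|x}(\rb)} \;-\; \sum_x p_x \sum_{j \ni x} q_{j|x} \log A_x(\rb). \]
In the second term $A_x(\rb)$ does not depend on $j$, and $\sum_{j \ni x} q_{j|x} = 1$ for each $x$ since $\qb \in \Kq$; this term collapses to $\sum_x p_x \log A_x(\rb) = -\phir(\rb)$ by \eqref{eq:phir}. The first term is exactly $\delta(\qb,Q(\rb))$ by \eqref{eq:delta} (using that $Q_{j|x}(\rb)=0$ exactly when $x \notin j$, which matches the support convention for $q_{j|x}$). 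This yields the displayed identity.

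The main (minor) obstacle is a bookkeeping one: being careful with the $0\cdot \log 0$ and $t^0=1$ conventions from Definition \ref{def:phi} and Definition \ref{def:AQR}, so that the manipulations inside the log are justified whenever some $q_{j|x}$ or $Q_{j|x}(\rb)$ vanishes. The remark following \eqref{eq:phi_with_QA} already addresses the one ambiguous case (both $q_{j|x}$ and $Q_{j|x}(\rb)$ equal to zero), where the summand contributes $0$ and the identity still holds. With the identity established, nonnegativity of $\delta$ gives $\varphi(\qb,\rb) \geq \phir(\rb) = \varphi(Q(\rb),\rb)$, and the strict positivity of $\delta$ away from the diagonal gives the equality case $\qb = Q(\rb)$.
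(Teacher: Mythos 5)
Your proof is correct and takes essentially the same route as the paper: the paper also derives the inequality from the identity $\varphi(\qb,\rb) = \delta(\qb,Q(\rb)) + \phir(\rb)$ (stated as Proposition~\ref{prop:3pt}, the ``3-point property''), handling the infinite cases first and then combining \eqref{eq:phi_with_QA}, \eqref{eq:phir}, and \eqref{eq:delta}. The only organizational difference is that the paper isolates the 3-point property as its own proposition (since it is reused for the 5-point property and the convergence proof), whereas you prove the identity inline; you might also note explicitly the case $\rb \in \Krst$ with $q_{j|x}>0$ but $Q_{j|x}(\rb)=0$, where both sides are $\infty$, which the paper addresses as its second bullet.
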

\begin{proof}
This is an immediate consequence of the 3-point property (that we will shortly state in Proposition \ref{prop:3pt}) and the fact that $\delta \geq 0$. 
\end{proof}
\begin{corollary} \label{cor:same_min}
Propositions \ref{prop:R_ineq} and \ref{prop:Q_ineq} clearly show that 
\[ \phiq(\qb) = \min_{\rb \in \Kr} \varphi(\qb,\rb) \mbox{ and } 
\phir(\rb) = \min_{\qb \in \Kq} \varphi(\qb,\rb) .\]
In particular, $\varphi$, $\phiq$, $\phir$ have the same minimum over their respective convex domains:  
\[ %H_\Jc(X|Y) \defeq 
\min_{\Kq \times \Kr} \varphi = \min_{\Kq} \phiq = \min_{\Kr} \phir .\]
It also follows that both $\phiq$ and $\phir$ are convex as they can be obtained as minimizing the convex $\varphi(\qb,\rb)$ in one of the variables.
\end{corollary}

Note that, combined with Proposition \ref{prop:min_phiq}, this completes the proof of Theorem \ref{thm:cond_graph_entropy}. Moreover, Theorem \ref{thm:new_formula} also follows as we simply need to substitute \eqref{eq:phir}, which expresses $\phir$, into $H_G(X|Y)=\min_{\Kr} \phir$. 

From this point on we follow the footsteps of the general theory \cite{csiszar_tusnady} of alternating minimization problems by proving the so-called \emph{3-point and 4-point properties}, and show how they imply convergence to the minimum through the 5-point property. 

The following identity can be thought of as a Pythagorean theorem for the ``squared distances'' $\varphi$ and $\delta$. Csisz\'ar and Tusn\'ady refer to it as the 3-point property. (In their general setting it may hold only as an inequality $\geq$ but in our case we always have equality.)
\begin{proposition}[3-point property] \label{prop:3pt}
For any $\qb \in \Kq$ and $\rb \in \Kr$ we have 
\[ \varphi(\qb,\rb) = \delta\big( \qb, Q(\rb) \big) + 
\underbrace{\varphi\big( Q(\rb), \rb \big)}_{=\phir(\rb)} . \]
%
%\begin{figure}[ht]
%\centering
\begin{center}
\includegraphics[width=5cm]{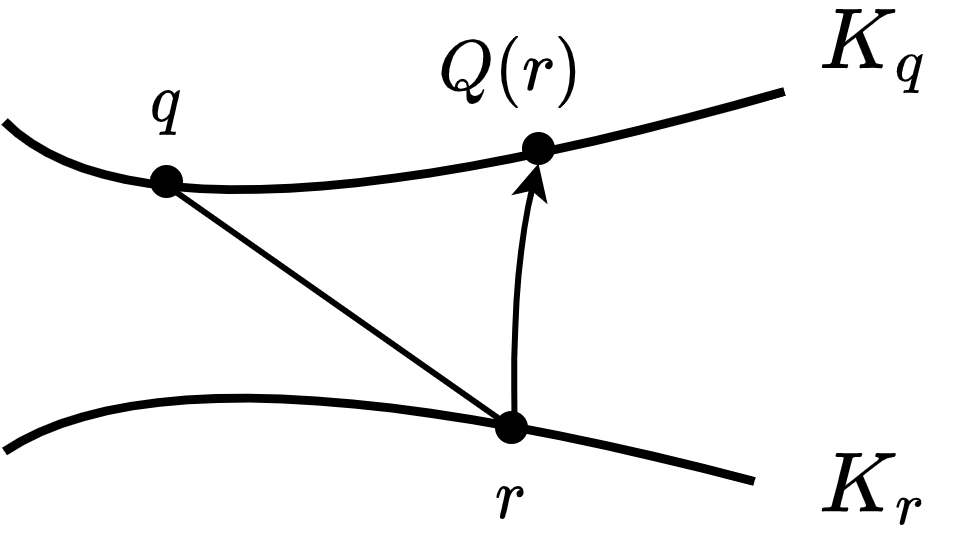}
\end{center}
%\caption{The 3-point property: $\varphi(\qb,\rb) = \delta\big( \qb, Q(\rb) \big) + \varphi\big( Q(\rb), \rb \big)$}
%\label{fig:3pt}
%\end{figure}
%
\end{proposition}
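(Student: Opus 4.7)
The plan is to derive the identity from the explicit formula \eqref{eq:phi_with_QA} for $\varphi$. That formula already packages $\varphi(\qb,\rb)$ in terms of the quantities $Q_{j|x}(\rb)$ and $A_x(\rb)$, which are exactly what appear in the right-hand side: $\delta(\qb,Q(\rb))$ is built from $Q_{j|x}(\rb)$ and $\phir(\rb)$ is built from $A_x(\rb)$. So the proof should be essentially a one-line manipulation once the case $\rb \in \Krst$ is isolated.

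First I would treat the generic case $\rb \in \Krst$. Starting from \eqref{eq:phi_with_QA}, I split the logarithm:
\[
\varphi(\qb,\rb) = \sum_x p_x \sum_{j \ni x} q_{j|x} \log \frac{q_{j|x}}{Q_{j|x}(\rb)} - \sum_x p_x \sum_{j \ni x} q_{j|x} \log A_x(\rb).
\]
The first double sum is precisely $\delta(\qb,Q(\rb))$ by \eqref{eq:delta}. In the second double sum, $\log A_x(\rb)$ does not depend on $j$, so the inner sum collapses using $\sum_{j \ni x} q_{j|x} = 1$ (a defining constraint of $\Kq$), yielding $-\sum_x p_x \log A_x(\rb) = \phir(\rb)$ by \eqref{eq:phir}. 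This gives the stated identity. A small caveat to address is the convention $q_{j|x}=Q_{j|x}(\rb)=0$, where the $\log$ of the ratio is read as $\log(1/A_x(\rb))$ as noted right after \eqref{eq:phi_with_QA}; this matches the $0$ contribution to $\delta$ and the $\log A_x(\rb)$ contribution to $\phir$, so the bookkeeping is consistent.

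Then I would handle the degenerate case $\rb \in \Kr \setminus \Krst$, where some $A_x(\rb)=0$. Here $\phir(\rb)=\infty$ by \eqref{eq:phir} and $\delta\ge 0$, so the right-hand side is $\infty$. For the left-hand side, fix an $x$ with $A_x(\rb)=0$; then for every $j \ni x$ the product $\prod_y (r_{j|y})^{p^{y|x}}$ vanishes, hence for each such $j$ there exists $y$ with $r_{j|y}=0$ and $p^{y|x}>0$ (so $p_{x,y}>0$). Because $\sum_{j \ni x} q_{j|x}=1$, at least one such $j$ satisfies $q_{j|x}>0$, producing a term $p_{x,y} f(q_{j|x},0)=\infty$ in \eqref{eq:phi}. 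Thus $\varphi(\qb,\rb)=\infty$ as well, and both sides agree.

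I do not anticipate a genuine obstacle: the content is an algebraic identity once $\varphi$ is written in the form \eqref{eq:phi_with_QA}, and the only subtlety is carefully disposing of the boundary behavior where $A_x(\rb)=0$ or individual variables vanish. Once the identity is in hand, Proposition \ref{prop:Q_ineq} follows immediately from $\delta \geq 0$ with equality iff $\qb = Q(\rb)$, as already remarked in the excerpt.
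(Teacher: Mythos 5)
Your proposal is correct and follows essentially the same route as the paper: start from \eqref{eq:phi_with_QA}, recognize $\delta(\qb,Q(\rb))$ and $\phir(\rb)$ after splitting the logarithm and using $\sum_{j\ni x} q_{j|x}=1$, and separately argue that $\varphi=\infty$ when the right-hand side is $\infty$. The only cosmetic difference is the case organization (you split on $\rb\in\Krst$ versus not, while the paper splits on which right-hand term is infinite); the sub-case $\rb\in\Krst$ with $q_{j|x}>0$, $Q_{j|x}(\rb)=0$ is handled implicitly by your extended-real manipulation, though the paper calls it out explicitly.
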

\begin{proof}
First let us consider the cases when one of the terms on the right-hand side is $\infty$. In both cases we need to show that one can find $x,y,j$ with $p_{x,y}>0$, $q_{j|x}>0$, $r_{j|y}=0$ so that we can conclude that the left-hand side $\varphi(\qb,\rb)$ is also $\infty$.
\begin{itemize}
\item We have $\phir(\rb)=\infty$ if and only $A_x(\rb)=0$ for some $x$. Fix such an $x$ and take a $j \ni x$ with $q_{j|x}>0$, which must exist as their sum is $1$. Since $A_{x}(\rb)=0$, there must exist $y$ such that $r_{j|y}=0$ and $p_{x,y}>0$. 
\item We have $\delta\big( \qb, Q(\rb) \big)=\infty$ if and only if there exist $j,x$ such that $q_{j|x}>0$ but $Q_{j|x}(\rb)=0$, which means, by the definition of $Q_{j|x}$, that there exists $y$ such that $p_{x,y}>0$ and $r_{j|y}=0$.
\end{itemize}
Otherwise we can simply combine formula \eqref{eq:phi_with_QA} for $\varphi(\qb,\rb)$, formula \eqref{eq:phir} for $\phir(\rb)$, and formula \eqref{eq:delta} for $\delta(\qb,\qprb)$ with $\qprb=Q(\rb)$ to get the claim.
\end{proof}
\begin{proposition}[4-point property] \label{prop:4pt}
For any $\qb,\qprb \in \Kq$ and $\rb \in \Kr$ we have 
\[ \varphi\big( \qb,R(\qprb) \big) \leq \varphi(\qb,\rb) + \delta(\qb,\qprb) .\]
%
%\begin{figure}[ht]
%\centering
\begin{center}
\includegraphics[width=5cm]{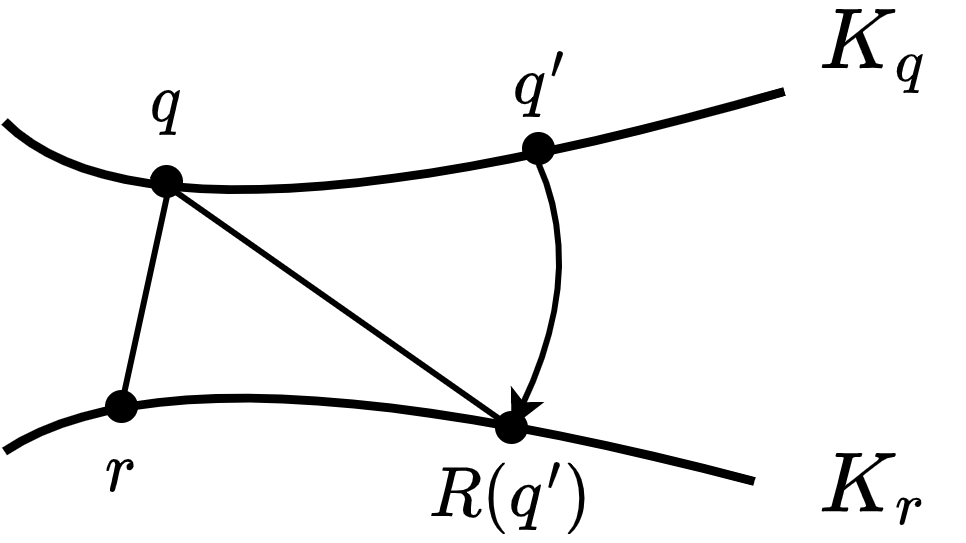}
\end{center}
%\caption{The 4-point property: $\varphi(\qb,\rprb) \leq \varphi(\qb,\rb) + \delta(\qb,\qprb)$}
%\label{fig:4pt}
%\end{figure}
%
\end{proposition}
\begin{proof}
We may assume that the right-hand side is finite, otherwise the inequality is trivial. It follows that for any triple $x,y,j$ with $p_{x,y}>0$, $q_{j|x}>0$ we must have both $r_{j|y}>0$ and $q'_{j|x}>0$. Let $\rprb \defeq R(\qprb)$. Then $r'_{j|y} \geq p_{x|y} \, q'_{j|x} >0$. So for any such triple all the variables are positive and we may write:
\[ \varphi(\qb,\rb) + \delta(\qb,\qprb) - \varphi(\qb,\rprb) 
= \sum_{x,y,j} p_{x,y} \, q_{j|x} \log \frac{q_{j|x}r'_{j|y}}{q'_{j|x}r_{j|y}} , \]
which, using that $\log t \geq 1-1/t$, can be bounded from below as follows: 
\[%\begin{multline*}
\sum_{x,y,j} p_{x,y} \, q_{j|x} \left( 1- \frac{q'_{j|x}r_{j|y}}{q_{j|x}r'_{j|y}} \right) = %\\
1 - \sum_{y} p^{y} \sum_j \frac{r_{j|y}}{r'_{j|y}} \underbrace{ \sum_x p_{x|y}q'_{j|x} }_{=R_{j|y}(\qprb)= r'_{j|y}} 
= 1 - \sum_{y} p^y \sum_j r_{j|y} = 0 ,
\]%\end{multline*}
and the proof is complete.
\end{proof}

Now we are ready to prove that the alternating optimization process converges to the minimum.
\begin{proof}[Proof of Theorem \ref{thm:convergence}]
Consider the sequences $\qb^{(n)}$ and $\rb^{(n)}$ of alternating optimization started from some $\qb^{(0)} \in \ints(\Kq)$. Fix any pair $\qb \in \Kq$, $\rb \in \Kr$, and let $n$ be a positive integer. Using Proposition \ref{prop:3pt} for the triple $\qb, \rb^{(n-1)} \Qto \qb^{(n)}$ and Proposition \ref{prop:4pt} for the quadruple $\qb, \rb, \qb^{(n)} \Rto \rb^{(n)}$ we get that 
\begin{align*}
\delta\big( \qb, \qb^{(n)} \big) + \varphi\big( \qb^{(n)}, \rb^{(n-1)} \big) & 
\stackrel{\tiny{3-pt}}{=} \varphi\big( \qb,\rb^{(n-1)} \big);\\
\varphi\big( \qb,\rb^{(n)} \big) &
\stackrel{\tiny{4-pt}}{\leq}  \varphi(\qb,\rb) + \delta\big( \qb,\qb^{(n)} \big) .
\end{align*}
Since $\qb^{(n)} \in \ints(\Kq)$ holds for all $n$, we have $\delta\big( \qb,\qb^{(n)} \big)<\infty$. Therefore adding the two inequalities above results in   
\[ \varphi\big( \qb^{(n)}, \rb^{(n-1)} \big) + \varphi\big( \qb,\rb^{(n)} \big) 
\leq \varphi(\qb,\rb) + \varphi\big( \qb,\rb^{(n-1)} \big) .\]
Since 
$\varphi\big( \qb^{(n)}, \rb^{(n)} \big) \leq \varphi\big( \qb^{(n)}, \rb^{(n-1)} \big)$ 
by Proposition \ref{prop:R_ineq}, it follows that 
\begin{equation} \label{eq:5pt}
\varphi\big( \qb^{(n)}, \rb^{(n)}\big) 
+ \varphi\big( \qb, \rb^{(n)}\big) 
\leq \varphi\big( \qb, \rb \big)
+ \varphi\big( \qb, \rb^{(n-1)}\big) .
\end{equation} 
This is what Csisz\'ar and Tusn\'ady refer to as the \emph{5-point property} for the points $\qb, \rb, \rb^{(n-1)} \Qto \qb^{(n)} \Rto \rb^{(n)}$.

Note that the second term on either side is an element of the sequence $\varphi\big( \qb, \rb^{(n)}\big) \geq 0$. First we assume that these elements are all finite. Then for any $\eps>0$ there must be infinitely many $n$ such that
\[ \varphi\big( \qb, \rb^{(n)}\big) \geq \varphi\big( \qb, \rb^{(n-1)}\big) - \eps , \]
otherwise the sequence would converge to $-\infty$, contradicting that each element is nonnegative. For any such $n$ we get from \eqref{eq:5pt} that 
\[ \varphi\big( \qb^{(n)}, \rb^{(n)}\big) \leq \varphi\big( \qb, \rb \big) + \eps .\]
Since $\varphi\big( \qb^{(n)}, \rb^{(n)}\big)$ is monotone decreasing, it has a limit that must satisfy 
\[ \lim_{n \to \infty} \varphi\big( \qb^{(n)}, \rb^{(n)}\big) \leq \varphi\big( \qb, \rb \big) + \eps \]
for any positive $\eps$, and hence for $\eps=0$ as well.

If, on the other hand, $\varphi\big( \qb, \rb^{(n)}\big) = \infty$ for some $n$, then $\varphi(\qb,\rb)=\infty$ follows from the 4-point property as $\delta\big( \qb,\qb^{(n)} \big) < \infty$, and we have the same conclusion: the limit is at most $\varphi(\qb,\rb)$.

Since this holds for any $\qb$ and $\rb$, it follows that the limit must be the minimum of $\varphi$. 
\end{proof}

%%%%%%%%%%%%%%%%%%%%%%%%%%%%%%%%%%%%%%%%%%%
\section{Convex corners} \label{sec:corner}

Convex corners are downward closed, convex subsets of $[0,\infty)^n$. It is possible to define entropy functions for convex corners, and this general theory was known to include the notion of graph entropy (via the vertex packing polytope, a convex corner associated to a graph). In this section we will show that conditional graph entropy can also be expressed as the entropy of an associated convex corner. Moreover, we will even define a dual problem in the form of another convex corner. 

Besides revealing a nice theoretical connection to a general theory, this also has significant practical implications: the dual problem provides a way to check optimality in the primal problem, even yielding an error bound. The error bound comes in particularly handy when combined with alternating optimization: we can stop at any time through the iterations $\rb^{(n)}$ and compute this error bound $\delta$, which then ensures that we are at most $\delta$ away from the optimum: 
\[  \phir\big( \rb^{(n)} \big) -\delta \leq H_G(X|Y) \leq \phir\big( \rb^{(n)} \big) .\]

We start by recalling the basic concepts regarding convex corners.

\subsection{Entropy of convex corners}
\begin{definition} \label{def:convex_corner}
A set $K \subset [0,\infty)^\Xc$ is said to be \emph{downward closed} if the following property\footnote{Here $0 \leq \aprb \leq \ab$ means that $0 \leq a'_x \leq a_x$ for each $x$. As before, we use the notation $\ab = \big( a_x \big)_{x \in \Xc}$ for points in $\Rb^{\Xc}$.} holds: 
\begin{equation*} %\label{eq:corner}
\mbox{if } \ab \in K \mbox{, then } \aprb \in K 
\mbox{ for all points } 0 \leq \aprb \leq \ab .
\end{equation*}
Similarly, $K$ is \emph{upward closed} if $\aprb \in K$ whenever $\ab \in K$ and $\aprb \geq \ab$.

We say that $K \subset [0,\infty)^\Xc$ is a \emph{convex corner} if $K$ is compact, convex, and downward closed. Usually $K$ is also required to have nonempty interior, or equivalently, to contain a point with strictly positive coordinates. 

Given a random variable $X$ and the corresponding probabilities $p_x$, $x \in \Xc$, let $\phia$ denote the following $[0,\infty)^\Xc \to [0,\infty]$ function: 
\[ \phia \colon \ab \mapsto -\sum_x p_x \log a_x .\]
Note that $\phia$ depends on the distribution of $X$, and we write $\phiaX$ when we want to emphasize this dependence. The entropy is defined as the minimum of $\phiaX$ over $K$:
\[ H_K(X) \defeq \min_{\ab \in K} \phiaX(\ab) .\]
The function $H_K(\cdot)$, defined for random variables on $\Xc$, is sometimes referred to as the \emph{entropy function} corresponding to the convex corner $K$. It can be seen that the entropy function $H_K(\cdot)$ 
%is always concave and it 
uniquely determines $K$. 
\end{definition}

A related useful concept is the \emph{antiblocker} $K^\ast$ of a convex corner $K$:
\[ K^\ast \defeq \big\{ \bb \geq 0 \, : \, \sum_x a_x b_x \leq 1 \mbox{ for all } \ab \in K \big\} .\] 
One can show that $K^\ast$ is also a convex corner, $\big( K^\ast \big)^\ast = K$, and $H_K(X)+H_{K^\ast}(X)=H(X)$.
% It follows that $\pi \mapsto H(\pi) - H_K(\pi)$ is also concave for any given $K$. 
For these and further properties of $H_K(\cdot)$, see \cite[Sections 4.1 \& 6]{survey2} and \cite[Section 5]{Vrana2021}.

We will also use the following notations: for $\ab, \bb \in [0,\infty)^{\Xc}$ let $\ab\bb$ denote the vector $\big( a_x b_x \big)_{x \in \Xc}$ (coordinate-wise multiplication). Similarly, $\ab^{-1}$ denotes the vector with coordinates $1/a_x$ (provided that each $a_x$ is positive). Furthermore,
\[ \ab K \defeq \big\{ \ab \bb \, : \, \bb \in K \big\} \mbox{ and } 
K^{-1} \defeq \big\{ \bb^{-1} \, : \, \bb \in K \big\} .\]
Note that $\phia(\ab \bb) = \phia(\ab) + \phia(\bb)$ and $\phia(\ab^{-1}) = - \phia(\ab)$. Finally, we denote the vector $\big( p_x \big)_{x \in \Xc}$ by $\pb$. Then $\phia(\pb)=-\sum_x p_x \log p_x$ is the entropy of $X$.

\subsection{Primal problem}
Now we introduce the \emph{a-problem} $\min_{\Ka} \phia$, which is, in fact, an equivalent formulation of the r-problem. We have already defined the function to minimize: $\phia$. Next we define the convex corner $\Ka$ (associated to $X,Y,\Jc$) simply as the smallest downward closed set containing $A(\Kr)$. 
\begin{definition}
Let
\[ \Ka \defeq \big\{ \ab \in \Rb^\Xc \, : \, 
0 \leq \ab \leq A( \rb ) \mbox{ for some } \rb \in \Kr \big\} .\]
%
%Note that $A$, and hence $\Ka$ as well, only depends on $\Jc$ and the conditional probabilities $p^{y|x}$.
\end{definition}
\begin{proposition} \label{prop:corner_entropy}
The set $\Ka$ is a convex corner and $\ds H_{\Ka}(X)=\min_{\Ka} \phia=H_G(X|Y)$. %$H_\Jc(X|Y)$. 
\end{proposition}
\begin{proof}%[Proof of Proposition \ref{prop:corner_entropy}]
The key observation is that $A_x$ is a concave function for each $x$, which follows immediately from the following claim:
%
%\begin{lemma}
let $\al_1,\ldots,\al_k \geq 0$ with $\al_1+\cdots+\al_k \leq 1$; then 
\[ f(t_1,\ldots,t_k) \defeq t_1^{\al_1} \cdots t_k^{\al_k} \]
is a concave function in the positive orthant 
$\big\{ (t_1,\ldots,t_k) \, : \, t_1,\ldots,t_k \geq 0 \big\}$.
%\end{lemma}
%
%\begin{proof}
Indeed, it is easy to see that the Hessian of $f$ is given by 
\[ 
H_{i,j}= 
\begin{cases}
\frac{\al_i\al_j}{t_i t_j} f(\tb) & \mbox{ if } i \neq j;\\
\frac{\al_i(\al_i-1)}{t_i^2} f(\tb) & \mbox{ if } i=j .
\end{cases} 
\]
Then for a vector $\ub=(u_i)$ we have 
\[ \ub H \ub^\top \big/ f(\tb) 
= \bigg( \sum_i \frac{\al_i u_i}{t_i} \bigg)^2 
- \sum_i \frac{\al_i u_i^2}{t_i^2} \leq 0 \]
by the Cauchy--Schwarz inequality, proving that the Hessian is negative semidefinite.
%Alternatively, one may prove the lemma by induction by setting $\al'_k \defeq \al_k+\al_{k+1}$ and replacing $t_k^{\al_k}t_{k+1}^{\al_{k+1}}$ with 
%\[ \bigg( t_k^{\frac{\al_k}{\al'_k}}t_{k+1}^{\frac{\al_{k+1}}{\al'_k}} \bigg)^{\al'_k} .\]
%\end{proof}
%

Since each $A_x$ is the sum of such functions, it is concave as well.

Now suppose that $\ab, \aprb \in \Ka$. By definition, there exist $\rb, \rprb \in \Kr$ such that $\ab \leq A(\rb)$ and $\aprb \leq A(\rprb)$. Then for any $t \in (0,1)$ and for any $x$ we have 
\[ t a_x + (1-t) a'_x \leq t A_x(\rb) + (1-t) A_x(\rprb) \leq A_x\big( t \rb + (1-t)\rprb \big) ,\]
where the second inequality is due to the concavity of $A_x$. It follows that the convex combination 
\[ t \ab + (1-t) \aprb \leq A\big( \underbrace{t \rb + (1-t)\rprb}_{\in \Kr} \big) \]
also lies in $\Ka$, proving the convexity of $\Ka$.

Since $\Kr$ is compact and $A$ is continuous, the image $A(\Kr)$ is also compact, and hence so is $\Ka$. Furthermore, if $\rb \in \Krst \supseteq \ints(\Kr) \neq \emptyset$, then $A_x(\rb)>0$ for each $x$, so $\Ka$ has a nonempty interior.

Finally, to see that $H_{\Ka}(X)=H_G(X|Y)$, it suffices to show that 
\[ \min_{\Ka} \phia = \min_{\Kr} \phir ,\]
which follows immediately from $\phir = \phia \circ A$ and the monotonicity of $\phia$: if $\ab \leq A(\rb)$, then 
\[ \phia(\ab) \geq \phia\big( A(\rb) \big) = \phir(\rb) \]
with equality when $\ab = A(\rb) \in \Ka$. 
\end{proof}
\begin{remark}
We make some comments regarding the a-problem.
\begin{itemize}
\item Note that $\Ka$ depends only on $\Jc$ (or the graph) and the conditional distributions of $Y \, | \, X=x$ (but not on the distribution of $X$). In the unconditioned case $\Ka$ is the vertex packing polytope $\VP(G)$ of the graph: the convex hull of the indicator functions of the independent sets. In general, $\Ka$ is not necessarily a polytope, it may be a more complicated convex set with ``curvy'' boundary. For an example, see Figure \ref{fig:ex_or} in Section \ref{sec:ex}. 
\item It is easy to see that $\phia$ is a strictly convex function over $(0,1]^\Xc$. Consequently, the a-problem always has a unique minimum point.
\item Note that the dimension of the a-problem is usually much smaller than that of the q-problem or the r-problem. However, the domain is not a polytope in this case and the complexity of the a-problem is, in some sense, hidden in the definition of the domain.
\end{itemize}
\end{remark}

\subsection{Dual problem} \label{sec:dual}
Now we introduce another convex corner that will lead to a dual problem. To this end, for each $j$ we define a function $\tau_j \colon [0,\infty)^\Xc \times [0,\infty)^\Yc \to \Rb$: for $\bb \in [0,\infty)^\Xc$ and $\tb \in [0,\infty)^\Yc$ we set  
\[ \tau_j(\bb,\tb) \defeq \sum_{x \in j} p_x b_x \prod_y \big( t_y \big)^{p^{y|x}} .\]
Note that $\sum_y p^{y|x} = 1$, hence $\tau_j(\bb,\tb)$ is homogeneous in $\tb$: for any scalar $\la>0$ we have $\tau_j(\bb,\la \tb) = \la \tau_j(\bb,\tb)$. 
\begin{definition} \label{def:L}
Let 
\begin{equation} \label{eq:Lj_def}
L_j \defeq \left\{ \bb \in [0,\infty)^\Xc \, : \, \forall \tb \, \tau_j(\bb,\tb) \leq \sum_y p^y t_y \right\} 
= \left\{ \bb \in [0,\infty)^\Xc \, : \, \tau_j(\bb,\tb) \leq 1 \mbox{ for all } \tb \mbox{ with } \sum_y p^y t_y=1 \right\} .
\end{equation}
%
%Each $S_j$ is convex corner since it is the intersection of convex corner simplices. NOT QUITE... they are not necessarily compact
Finally, we define $\La$ as the intersection of all $L_j$:
\[ \La \defeq \bigcap_j L_j .\] 
\end{definition}
To see that $L$ is a convex corner, notice that for any given $j$ and $\tb$, the points $\bb$ for which $\tau_j(\bb,\tb) \leq 1$ form a downward closed polyhedron, and $L$ is the intersection of such sets.
\begin{remark}
For graph entropy (i.e., the unconditioned case $|\Yc|=1$) it can be seen easily that 
\[ L = \big\{ \bb \, : \, \forall j \, \sum_{x \in j} p_x b_x \leq 1 \big\} = (\pb \Ka)^\ast .\]
\end{remark}
As the following lemma shows, the containment $\La \subseteq (\pb \Ka)^\ast$ is true in general.
\begin{lemma} 
For any $\ab \in \Ka$ and $\bb \in \La$ we have $\ds \sum_x p_x a_x b_x \leq 1$. In other words, $\La \subseteq (\pb \Ka)^\ast$. 
\end{lemma}
\begin{proof}
We have $\ab \leq A(\rb)$ for some $\rb \in \Kr$, thus 
\[ \sum_x p_x a_x b_x \leq \sum_j \sum_{x \in j} p_x b_x  \, \prod_y \big( r_{j|y} \big)^{p^{y|x}} 
\leq \sum_j \sum_y p^y r_{j|y} = \sum_y p^y \underbrace{\sum_j r_{j|y}}_{=1} = \sum_y p^y = 1, 
\]
where we used that $\bb \in L_j$ for any given $j$, and hence $\tau_j(\bb,\tb) = \sum_{x \in j} p_x b_x \prod_y \big( t_y \big)^{p^{y|x}} \leq \sum_y p^y t_y$ holds for $t_y=r_{j|y}$.
\end{proof}
\begin{corollary} \label{cor:half_duality}
For any $\ab \in \Ka$ and $\bb \in \La$ we have 
\[ \phia(\ab) + \phia(\bb) \geq 0 .\]
In other words, 
\begin{equation} \label{eq:minKL}
\min_{\Ka} \phia + \min_{\La} \phia \geq 0 .    
\end{equation}
\end{corollary}
\begin{proof}
Since $-\log$ is convex and monotone decreasing, by the above lemma we have 
\[ \phia(\ab) + \phia(\bb) = \phia(\ab \bb) = -\sum_x p_x \log(a_x b_x) \geq -\log\left( \sum_x p_x a_x b_x \right) \geq -\log(1) = 0 .\]
\end{proof}

We will shortly see that \eqref{eq:minKL} actually holds with equality. In order to prove this, let us consider the set $\La^{-1} = \big\{ \bb^{-1} \, : \, \bb \in \La \big\}$. Since $\La$ is convex and downward closed, it follows easily that $\La^{-1}$ is convex and upward closed (using the convexity of $t \mapsto 1/t$ for $t>0$). The key observation is that $\Ka$ and $\La^{-1}$ always have a common point.
\begin{theorem} \label{thm:duality}
The intersection of the downward closed convex set $\Ka$ and the upward closed convex set $\La^{-1}$ is a single point $\ab$, where $\phia$ takes its minimum over $\Ka$ and its maximum over $\La^{-1}$. Then 
\[ H_G(X|Y) = \phia(\ab) = \min_{\Ka} \phia = \max_{\La^{-1}} \phia = - \min_{\La} \phia .\]
Furthermore, $\Ka$ and $\La^{-1}$ are separated by a hyperplane with normal vector $\pb \ab^{-1}=\big( p_x/a_x \big)_{x \in \Xc}$. 
\end{theorem}
Before we present the proof, recall that the mappings $Q$ and $R$ ``jump'' between the q-problem and r-problem in a way that the function value decreases. In what follows we will focus on the r-problem and the corresponding stepping map 
\[ \Fr \defeq R \circ Q \colon \Kr \to \Kr .\]
\begin{proposition} \label{prop:min_fixed}
Every minimum point of $\phir$ must be a fixed point of $\Fr$.
\end{proposition}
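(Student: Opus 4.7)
The plan is to chain together the two one-variable optimality inequalities (Propositions \ref{prop:R_ineq} and \ref{prop:Q_ineq}) to show that an arbitrary $\rb^\ast \in \Kr$ satisfies $\phir(\Fr(\rb^\ast)) \leq \phir(\rb^\ast)$, and then exploit the equality case of Proposition \ref{prop:R_ineq} at a minimum to force $\Fr(\rb^\ast)=\rb^\ast$.

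First I would fix a minimum point $\rb^\ast$ of $\phir$ on $\Kr$. Before doing anything else, I would observe that $\phir(\rb^\ast)<\infty$: the relative interior $\ints(\Kr)$ is nonempty, every point there lies in $\Krst$, and $\phir$ is finite on $\Krst$ by \eqref{eq:phir}. In particular, $\rb^\ast \in \Krst$, so $Q(\rb^\ast)$ is given by the explicit formula in Definition~\ref{def:AQR} and every $\varphi$-value we shall write down below is finite.

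Next I would chain the two key inequalities. By the definition of $\phir$,
\[
\phir(\rb^\ast) = \varphi\bigl(Q(\rb^\ast),\rb^\ast\bigr).
\]
Applying Proposition~\ref{prop:R_ineq} with $\qb=Q(\rb^\ast)$ gives
\[
\varphi\bigl(Q(\rb^\ast),\rb^\ast\bigr) \;\geq\; \varphi\bigl(Q(\rb^\ast),R(Q(\rb^\ast))\bigr) = \varphi\bigl(Q(\rb^\ast),\Fr(\rb^\ast)\bigr).
\]
Applying Proposition~\ref{prop:Q_ineq} with $\rb=\Fr(\rb^\ast)$ gives
\[
\varphi\bigl(Q(\rb^\ast),\Fr(\rb^\ast)\bigr) \;\geq\; \varphi\bigl(Q(\Fr(\rb^\ast)),\Fr(\rb^\ast)\bigr) = \phir\bigl(\Fr(\rb^\ast)\bigr).
\]
Combining, $\phir(\rb^\ast) \geq \phir(\Fr(\rb^\ast))$.

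Finally, since $\rb^\ast$ is a minimum, $\phir(\Fr(\rb^\ast)) \geq \phir(\rb^\ast)$ as well, so both inequalities above must be equalities. The equality case in Proposition~\ref{prop:R_ineq} states that $\varphi(\qb,\rb)=\varphi(\qb,R(\qb))$ forces $\rb=R(\qb)$; applied to $\qb=Q(\rb^\ast)$ this yields $\rb^\ast = R(Q(\rb^\ast)) = \Fr(\rb^\ast)$, as required. The only subtlety to watch is the finiteness issue in the first paragraph, so that the sharp equality clauses of Propositions~\ref{prop:R_ineq} and~\ref{prop:Q_ineq} apply; once that is settled the argument is just two inequalities and an equality case.
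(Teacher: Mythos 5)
Your proof is correct and follows essentially the same route as the paper: combine Propositions \ref{prop:R_ineq} and \ref{prop:Q_ineq} to get $\phir(\Fr(\rb^\ast)) \leq \phir(\rb^\ast)$, then use the equality clause of Proposition \ref{prop:R_ineq} at a minimum. The only difference is that you spell out the chain and the finiteness check (that $\rb^\ast \in \Krst$) explicitly, which the paper leaves implicit.
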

\begin{proof}
Combining Propositions \ref{prop:R_ineq} and \ref{prop:Q_ineq} gives that for any $\rb \in \Krst$ we have 
\[ \phir\big( \Fr(\rb) \big) \leq \phir(\rb) \]
with equality if and only if $\rb$ is a fixed point of $\Fr$. In other words, if $\rb$ is not a fixed point, then we have strict inequality and hence $\phir(\rb)$ cannot be the minimum.
\end{proof}
\begin{proof}[Proof of Theorem \ref{thm:duality}]
Since $\phia(\bb^{-1}) = -\phia(\bb)$, we have
\[ \max_{\La^{-1}} \phia = - \min_{\La} \phia \stackrel{\eqref{eq:minKL}}{\leq} \underbrace{\min_{\Ka} \phia}_{=H_G(X|Y)} = \phia(\ab) ,\]
where $\ab$ denotes the unique minimum point\footnote{Since $\Ka$ is compact and $\phia \colon \Ka \to [0,\infty]$ is continuous, its minimum is attained at some $\ab \in \Ka$. The minimum is finite, so we have $a_x>0$ for each $x$. In that region $\phia$ is strictly convex, therefore $\ab$ is indeed unique.} of $\phia$ over $\Ka$. It remains to be shown that $\ab \in \La^{-1}$, implying the only missing inequality $\max_{\La^{-1}} \phia \geq \phia(\ab)$ and confirming that $\Ka \cap \La^{-1} = \{ \ab \}$. (Note that the $\phia(\aprb) > \phia(\ab)$ for any $\aprb \in \Ka \sm \{\ab\}$, and hence $\aprb \notin \La^{-1}$.) Also, the gradient of $\phia$ at $\ab$ is $-\pb \ab^{-1}$, so the hyperplane through $\ab$ that separates the convex sets $\Ka$ and $\La^{-1}$ must be the one with normal vector $\pb \ab^{-1}$.

In order to prove that $\ab \in \La^{-1}$, let $\rb$ be such that $\ab = A(\rb)$ so that $\phir(\rb)=\phia(\ab)$, that is, $\rb$ minimizes $\phir$ over $\Kr$. (Note that $\rb$ may not be unique.) By Proposition \ref{prop:min_fixed}, $\rb$ is a fixed point of $\Fr$. That is, for $\qb \defeq Q(\rb)$ we have $R(\qb)=\rb$.

For brevity, we write $\partial_{j|y}$ for the partial derivative w.r.t.\ the variable $r_{j|y}$, and $g_{j,x}$ for the product in the definition of $A_x$, that is: 
\[ g_{j,x} \defeq \prod_y \big( r_{j|y} \big)^{p^{y|x}} \mbox{ so that } 
A_x(\rb) = \sum_{j \ni x} g_{j,x} .\] 
If $r_{j|y}>0$, then we have 
\[\partial_{j|y} A_x \big( \rb \big) =  
\begin{cases}
0 & \mbox{ if } x \notin j ;\\
\frac{p^{y|x}}{r_{j|y}} g_{j,x}
& \mbox{ if } x \in j ;
\end{cases} 
\]
and hence
\[%\begin{multline*}
\partial_{j|y} \phir \big( \rb \big) =  
 - \sum_x p_x \frac{ \partial_{j|y} A_x(\rb)}{A_x(\rb)} 
=  - \sum_{x \in j} 
\frac{p_x p^{y|x}g_{j,x}}{r_{j|y} A_x(\rb)} %\\
= - \frac{ p^y \sum_{x \in j} p_{x|y} q_{j|x}}{r_{j|y}} 
= -\frac{p^y R_{j|y}(\qb)}{r_{j|y}} = - p^y ,
\]%\end{multline*}
where we used that $R(\qb)=\rb$. 

Now we fix a $\jh \in \Jc$ and a vector $\tb=(t_y)$ with $t_y \geq 0$. Then we perturb $\rb$ in the coordinates $\jh|y$ as follows: for a given $\eps>0$ we define the perturbed vector $\rb^\eps$ as  
\[ r^\eps_{j|y} \defeq 
\begin{cases}
r_{j|y} + \eps t_y & \mbox{if } j=\jh ;\\
r_{j|y}            & \mbox{if } j \neq \jh .
\end{cases} \]
Then $A_x$ does not change for $x \notin \jh$. As for $x \in \jh$, we claim that 
\begin{equation} \label{eq:change}
A_x(\rb^\eps) - A_x(\rb) \geq  \eps \prod_y \big( t_y \big)^{p^{y|x}} .
\end{equation}
%
%This clearly holds with equality if $r_{\jh|y}=0$ for each $y$. 
To see this, notice that the function $f \colon \ub \mapsto \prod_y \big( u_y \big)^{p^{y|x}}$ is 
concave (as we have shown it in the proof of Proposition \ref{prop:corner_entropy}) and homogeneous (of degree $1$) to conclude that 
\[ f(\ub+\eps \tb)= 2 f \left( \frac{\ub+\eps \tb}{2} \right) \geq 2 \frac{ f(\ub) + f(\eps \tb) }{2} = f(\ub) + \eps f(\tb) ,\] 
which implies \eqref{eq:change}. It follows that 
\begin{equation*} %\label{eq:directional}
\lim_{\eps \to 0+} \, \frac{\phir(\rb^\eps)-\phir(\rb)}{\eps}
= \lim_{\eps \to 0+} \, \frac{\phia( A(\rb^\eps) )-\phia( A(\rb)) }{\eps}
\leq  -\sum_{x \in \jh} \frac{p_x}{a_x} \prod_y
\big( t_y \big)^{p^{y|x}} 
= - \tau_{\jh} \big( \ab^{-1}, \tb \big) .
\end{equation*}
Note that $\rb^\eps \notin \Kr$ anymore because moving in direction $\tb$ violates the linear constraints of $\Kr$. So for each $y$ we decrease other (positive) coordinates $r_{j|y}$ by a total of $\eps t_y$ to get a point $\rhatb^\eps$ in $\Kr$. Since the partial derivative $\partial_{j|y} \phir$ is $-p^y$ for such positive coordinates, it is easy to see that we get the following:
\begin{equation} \label{eq:directional}
\lim_{\eps \to 0+} \, \frac{\phir(\rhatb^\eps)-\phir(\rb)}{\eps}
= - \tau_{\jh} \big( \ab^{-1}, \tb \big) + \sum_y p^y t_y.
\end{equation}
Since $\rhatb^\eps \in \Kr$, we have $\phir(\rhatb^\eps) \geq \min_{\Kr} \phir = \phir(\rb)$ for each $\eps$. So the above limit must be nonnegative, that is, 
\[ \tau_{\jh} \big( \ab^{-1}, \tb \big) \leq \sum_y p^y t_y .\]
This holds for any $\tb \geq 0$, meaning that $\ab^{-1} \in L_{\jh}$. This can be done for any $\jh \in \Jc$, implying $\ab^{-1} \in \La$. 
\end{proof}

%%%%%%%%%%%%%%%%%%%%%%%%%%%%%%%%%%%%%%%%%%%%%%%%%%%%%%%%%%%%%%%%%
\subsection{The Orlitsky--Roche example} \label{sec:ex}

Orlitsky and Roche considered the following simple example, see \cite[Examples 2\&5]{orlitsky_roche}. Let $\Xc=\Yc=\{1,2,3\}$ with the distribution 
\[ p_{x,y}= 
\begin{cases}
1/6 & \mbox{ if } x \neq y; \\
0   & \mbox{ if } x = y.
\end{cases}
\]
Furthermore, let $G$ be the graph on the vertex set $\Xc$ containing a single edge $(1,3)$ so that $G$ has two maximal independent sets: $\{1,2\}$ and $\{2,3\}$. They showed that 
\begin{equation} \label{eq:or_entr}    
H_G(X|Y) = -\frac{2}{3} \left( \frac{1}{4}\log \frac{1}{4} + \frac{3}{4}\log \frac{3}{4} \right) \approx 0.37489 . %00964125389 .
\end{equation}
We will use this example to illustrate our results. We have
\[ %\Xc=\Yc=\{1,2,3\} ; \quad 
p_x=p^y=1/3 \, (\forall x,y) ; \quad p_{x|y}=p^{y|x}=
\begin{cases}
1/2 & \mbox{ if } x \neq y; \\
0   & \mbox{ if } x = y.
\end{cases} \]
We will use the notations $\al \defeq \{1,2\}$ and $\be \defeq \{2,3\}$ for the independent sets so that $\Jc=\{\al,\be\}$. It means that the r-problem has six non-negative variables with the following constraints:
\[ r_{\al|1}+r_{\be|1}=1 ; \ 
r_{\al|2}+r_{\be|2}=1 ; \ 
r_{\al|3}+r_{\be|3}=1 .\]
Then the mapping $A$ is described by the following coordinate functions:
\begin{align*}
A_1(\rb) &= \sqrt{r_{\al|2} r_{\al|3}} \\
A_2(\rb) &= \sqrt{r_{\al|1} r_{\al|3}} + \sqrt{r_{\be|1} r_{\be|3}}\\
A_3(\rb) &= \sqrt{r_{\be|1} r_{\be|2}} 
\end{align*}
Next we describe the convex corner $\Ka$ associated to this example. Note that $A_1^2(\rb)+A_3^2(\rb) \leq r_{\al|2} + r_{\be|2} = 1$, and $A_2(\rb) \leq \sqrt{r_{\al|1}+r_{\be|1}} \cdot \sqrt{r_{\al|3}+r_{\be|3}} =1$ by Cauchy--Schwarz. It follows that for any $\ab \in \Ka$ we have $a_1^2 + a_3^2 \leq 1$ and $a_2 \leq 1$. Now fix $a_1, a_3$ such that $a_1^2 + a_3^2 \leq 1$, and let us try to find the largest possible corresponding $a_2$ value: 
\[ \mbox{let } w \defeq r_{\al|2} ; \mbox{then } 
r_{\be|2} = 1-w; 
r_{\al|3} = \frac{a_1^2}{w} ; 
r_{\be|1} = \frac{a_3^2}{1-w} .\]
Therefore 
\[ a_2 = A_2(\rb) = \sqrt{\frac{a_1^2}{w} \left(1- \frac{a_3^2}{1-w} \right) } + 
\sqrt{\frac{a_3^2}{1-w} \left(1- \frac{a_1^2}{w} \right) } .\]
We need to maximize this formula in the one free variable $w$. It is easy to see that when $a_1+a_3 \leq 1$, the maximum is always $1$, meaning that the boundary of $\Ka$ includes a triangle whose vertices are $(0,1,0)$; $(1,1,0)$ and $(0,1,1)$. When $a_1+a_3 > 1$ and $a_1^2+a_3^2 \leq 1$, we did not find a closed formula, but one can easily plot the maximum as a function of the parameters $a_1$ and $a_3$; see Figure \ref{fig:ex_or}. Note that when $a_1=a_3$, the maximum is always taken at $w=1/2$, so we get $a_2=2\sqrt{2}a_1\sqrt{1-2a_1^2}$ for the boundary of $\Ka$ in this cross section. 

\begin{figure}[ht]
\centering
\includegraphics[height=4in]{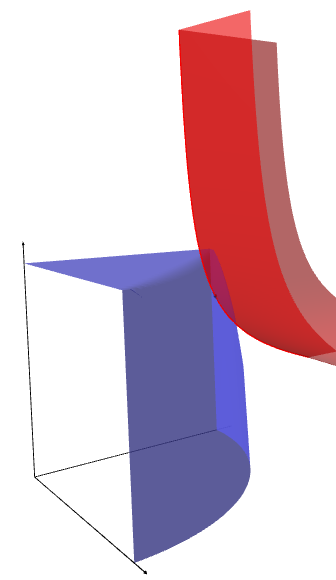} \hspace{0.5in} 
\includegraphics[height=4in]{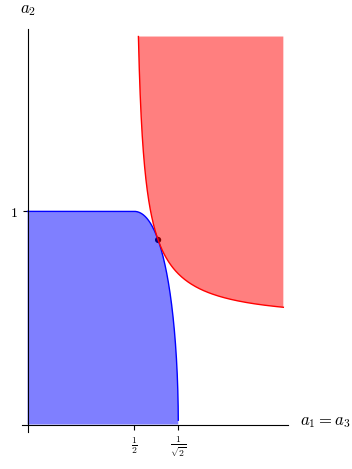} 
\caption{On the left: plots of the boundaries of $\Ka$ (blue) and $\La^{-1}$ (red). On the right: the two-dimensional cross section corresponding to the plane $a_1=a_3$. The black dot marks the unique intersection point, where $\phia$ takes its minimum over $\Ka$ and its maximum over $\La^{-1}$.}
\label{fig:ex_or}
\end{figure}

As for the convex corner $\La$ corresponding to the dual problem, first we need to work out the formulas for $\tau_\al$ and $\tau_\be$:
\begin{align*}
\tau_\al(\bb,\tb) &= \frac{1}{3} \left( b_1 \sqrt{t_2 t_3} + b_2 \sqrt{t_1 t_3} \right) ;\\
\tau_\be(\bb,\tb) &= \frac{1}{3} \left( b_2 \sqrt{t_1 t_3} + b_3 \sqrt{t_1 t_2} \right) .
\end{align*}
By Cauchy--Schwarz we have 
\[ \tau_\al(\bb,\tb) \leq \frac{1}{3} \sqrt{b_1^2+b_2^2} \sqrt{(t_1+t_2)t_3} \leq \frac{1}{3} \sqrt{b_1^2+b_2^2} \, \frac{t_1+t_2+t_3}{2} ,\]
which shows by \eqref{eq:Lj_def} that $\bb \in \La_\al$ provided that $b_1^2+b_2^2 \leq 4$. It is also easy to see that $\bb \notin \La_\al$ if $b_1^2+b_2^2 > 4$. Similar calculations show that $\bb \in \La_\be$ if and only if $b_2^2+b_3^2 \leq 4$. We conclude that 
\[ \La = \big\{ \bb=(b_1,b_2,b_3) \, : \, b_1^2+b_2^2 \leq 4 \mbox{ and } b_2^2+b_3^2 \leq 4 \big\} .\]
In Figure \ref{fig:ex_or} we plotted (the boundary of) $\La^{-1}$ instead of $\La$ to illustrate the fact that $\Ka$ and $\La^{-1}$ intersect in a single point (marked by a black dot in the figure). This intersection point is where $\phia$ takes its minimum over $\Ka$. The minimum points of the various problems are as follows.

The minimum of $\phiq$ is attained at the following point $\qb$:
\begin{align*}
q_{\al|1} &= 1 ; \, q_{\be|1} = 0 ; \\
q_{\al|2} &= q_{\be|2} = 1/2 ; \\
q_{\al|3} &= 0 ; \, q_{\be|3} = 1 .
\end{align*}
Then $\phir$ takes its minimum at the corresponding point $\rb=R(\qb)$:
\begin{align*}
r_{\al|1} &= 1/4 ; \, r_{\be|1} = 3/4 ; \\
r_{\al|2} &= 1/2 ; \, r_{\be|2} = 1/2 ; \\
r_{\al|3} &= 3/4 ; \, r_{\be|3} = 1/4 .
\end{align*}
Lastly, $\phia$ takes its minimum at 
\[ \ab=A(\rb)= \left( \sqrt{3/8}, \sqrt{3/4}, \sqrt{3/8} \right) .\] 
Using our results, one can easily verify that this is the optimal point in $\Ka$ by checking that $\ab^{-1} \in \La$, which indeed holds as 
\[ \mbox{for } \bb=\ab^{-1}=\left( \sqrt{8/3}, \sqrt{4/3}, \sqrt{8/3} \right) 
\mbox{ we have } b_1^2+b_2^2=b_2^2+b_3^2= 4/3 + 8/3=4 .\]
This confirms the value of $H_G(X|Y)$; see \eqref{eq:or_entr}. 

Finally, the table below shows the values $\phir\big( \rb^{(n)} \big)$ of the iterative process started from a random point $\qb^{(0)}$. We also included the error (i.e., the distance from the minimum) and our error bound based on the dual problem (see Theorem \ref{thm:error_bound} in Section \ref{sec:algorithm}). 
%\begin{center}
%\begin{tabular}{r|l}
%$n$ & value \\
%\hline
%$5$   & $0.3749133096268900$ \\
%$10$  & $0.3748904989954996$ \\
%$15$  & $0.3748901033939524$ \\
%$20$  & $0.3748900965336073$ \\
%$25$  & $0.3748900964146385$ \\
%$30$  & $0.3748900964125753$ \\
%$35$  & $0.3748900964125396$ \\
%\hline
%$\min \varphi$ & $0.3748900964125389...$       
%\end{tabular}
%\end{center}
%
%
\begin{center}
\begin{tabular}{r|l|l|l}
$n$ & value & error & error bound (see Thm \ref{thm:error_bound})\\
    & $\phir\big( \rb^{(n)} \big)$ & $\phir\big( \rb^{(n)} \big) - H_G(X|Y)$ & $\delta\big( A\big( \rb^{(n)} \big)\big) $ \\
\hline
$5$  & $0.3749085763210158$ & $1.8\cdot 10^{-5}$  & $2.5\cdot 10^{-3}$\\
$10$ & $0.3748904169016328$ & $3.2\cdot 10^{-7}$  & $3.2\cdot 10^{-4}$\\
$15$ & $0.3748901019703158$ & $5.5\cdot 10^{-9}$  & $4.3\cdot 10^{-5}$\\
$20$ & $0.3748900965089192$ & $9.6\cdot 10^{-11}$ & $5.6\cdot 10^{-6}$\\
$25$ & $0.3748900964142102$ & $1.6\cdot 10^{-12}$ & $7.4\cdot 10^{-7}$\\
$30$ & $0.3748900964125679$ & $2.9\cdot 10^{-14}$ & $9.8\cdot 10^{-8}$\\
$35$ & $0.3748900964125393$ & $4.6\cdot 10^{-16}$ & $1.2\cdot 10^{-8}$\\
\hline
$H_G(X|Y)$ & $0.3748900964125389...$   
\end{tabular}
\end{center}
%

%%%%%%%%%%%%%%%%%%%%%%%%%%%%%%%%%%%%%%%%%%%%%%%%%%%%%%%%%%%%%%%%%
\subsection{Fractional chromatic number}

Given a convex corner $K$, it is natural to ask what the maximum of its entropy function is. That is, by varying the distribution of $X$, what is the maximal possible $H_K(X)$ we can get for a fixed $K$? In general one can say the following about this maximum entropy.
\begin{lemma}[see Corollary 1.2.21 in \cite{boreland_thesis}] \label{lem:tau}
Let $K \subset \Rb^\Xc$ be an arbitrary convex corner. Then 
\[ \max_X H_K(X) = \log \tau(K) ,\]
where $\tau(K)$ denotes the smallest $t \geq 1$ such that the constant $1/t$ vector lies in $K$. (Note that here $X$ can be any random variable on $\Xc$: its support may be a proper subset of $\Xc$.)
\end{lemma}
%
%\begin{proof}
%Let $t=\tau(K)$ and set $a_x=1/t$ for each $x$. Then $\phiaX(\ab)=\log t$ for any $X$. It follows that $\log t$ is an upper bound. Since $t$ is the smallest with the given property, it follows that $\ab$ must be on the boundary of $K$. Consequently, there is a supporting hyperplane of $K$ at $\ab$ with a normal vector $\pb \geq 0$. Let us normalize $\pb$ such that the sum of its coordinates $p_x$ is $1$, and let $X$ be a random variable with $p_x=\P(X=x)$ for each $x$. We claim that $H_K(X)=\log t$. This follows immediately from the fact that the gradient of $\phiaX$ at $\ab$ is $(-1/t) \pb$. In other words, our hyperplane separates the convex sets $K$ and $\{ \bb \, : \, \phiaX(\bb) \leq \phiaX(\ab) = \log t \}$, and hence $H_K(X) = \min_K \phiaX = \phiaX(\ab) = \log t$, and the proof is complete.
%\end{proof}

The question arises: is there a special meaning of $\tau(K)$ in our setting? In the unconditioned case, that is, for the vertex packing polytope $K=\VP(G)$, $\tau(K)$ is known to be equal to the fractional chromatic number of the graph \cite[Lemma 4]{survey2}. Is there a generalization of this result: does $\tau( \Ka )$ have a nice graph theoretic meaning in the conditional setting?
\begin{problem} \label{prob:max_entr}
Fix a graph $G$ equipped with a distribution on $\Yc$ at each vertex $x$ (described by $p_{y|x}$). Note that this determines the convex corner $\Ka$. Is it possible to give a (graph theoretic) description of $\tau( \Ka )$? This could lead to a notion generalizing the fractional chromatic number to \emph{measure-labelled graphs}.
\end{problem}
%

%%%%%%%%%%%%%%%%%%%%%%%%%%%%%%%%%%%%%%%%%%%%%%%%%%%%%%%%%%%%%%%%%
\section{Discussion of the algorithm} \label{sec:algorithm}

As we have seen in the introduction, one may start at any point $\qb^{(0)} \in \ints(\Kq)$ and alternate in applying the mappings $R$ and $Q$ to get a sequence \eqref{eq:alt_seq} with decreasing $\varphi$-values. In fact, Theorem \ref{thm:convergence} tells us that the values always converge to $\min \varphi(\qb,\rb) = H_G(X|Y)$. 

In this section, we provide an error bound for the algorithm, then propose a tweak for improving the running time, and finally analyze the rate of convergence in the unconditioned case of graph entropy.

\subsection{Error bound}
How long should we run the iterations? A natural stopping rule is to terminate the algorithm at a step where the drop in the $\varphi$-value gets below some threshold. Is there a way to know how far we are from the actual minimum? Using the dual problem defined in Section \ref{sec:dual}, we can easily get an error bound for any given $\rb^{(n)}$ we stop at. 
\begin{theorem} \label{thm:error_bound}
Let $\rb \in \Kr$ arbitrary and set $\ab=A(\rb)$. For each $j$ consider the following maximization problem: 
\begin{equation} \label{eq:delta_j}
1+\delta_j(\ab) \defeq \max_{\tb} \tau_j\big(\ab^{-1},\tb\big) 
= \max_{\tb} \sum_{x \in j} \frac{p_x}{a_x} \prod_y
\big( t_y \big)^{p^{y|x}} %\\
\mbox{ under the constraints }
t_y \geq 0; \, \sum_y p^y t_y= 1 .
\end{equation}
Then $\phia(\ab)=\phir(\rb)$ is at most $\delta(\ab) \defeq \max_j \delta_j(\ab)$ away from the minimum. More precisely,
\[ \phir(\rb) - H_G(X|Y) \leq \log\big( 1+\delta(\ab) \big) \leq \delta(\ab) .\]
In particular, $\ab$ (and hence $\rb$) is optimal if and only if $\delta(\ab) = 0$.
\end{theorem}
\begin{remark}
Note that each maximization is a convex optimization problem, whose dimension ($|\Yc|$) is small compared to that of the r-problem ($|\Yc| \cdot |\Jc|$) so we can solve them with high precision relatively fast.
\end{remark}
\begin{proof}
By definition, $\bb \defeq \big( 1+\delta(\ab) \big)^{-1}\ab^{-1}$ lies in $\La$. Therefore 
\[ H_G(X|Y) = -\min_\La \phia \leq -\phia(\bb) = \log\big( 1+\delta(\ab) \big) + \phia(\ab) .\]
\end{proof}

The table at the end of Section \ref{sec:ex} compares this error bound to the true error for the Orlitsky--Roche example.

\subsection{A tweak: deleting redundant sets}
The running time of the algorithm depends on two things: the time required to perform a single step and the number of steps required to get within the desired distance of the minimum. With one small tweak we can achieve significant gains for both at the same time.

First of all, note that at each step the algorithm performs $\Oc\big( |\Xc| \cdot |\Yc| \cdot |\Jc| \big)$ operations when computing $\rb^{(n)}=R\big( \qb^{(n)} \big)$ and $\qb^{(n+1)}=Q\big( \rb^{(n)} \big)$. 

In examples there are often a large number of (independent sets) $j$ that are actually not ``used'' at the optimal $\qb$ and $\rb$ in the sense that $q_{j|x}=0$ and $r_{j|y}=0$ for all $x,y$. For any such $j$, these variables will converge to $0$ through the iterations. To speed things up, we may want to detect such \emph{redundant} sets $j$ early and set the corresponding variables to $0$. Note that these variables remain to be $0$ from this point on, so we may remove such a $j$ from $\Jc$ and proceed with the iterations using a smaller set $\Jc$. This immediately reduces the computational complexity for each subsequent step. Moreover, it typically results in a better rate of convergence as well: without redundant sets, the error usually decays at a faster rate. Consequently, this version of the algorithm often requires considerably fewer steps to reach the desired precision. (This phenomenon will be illustrated for graph entropy both by an example and an analysis.) 

However, when the algorithm terminates and outputs an (approximate) minimum point for some subsystem $\Jc$ of the original $\Jcor$, we should justify that all deletions we made along the way were indeed necessary. So we take the corresponding point $\ab$ and perform our optimality check/error bound calculations: we compute $\delta_j(\ab)$ as in \eqref{eq:delta_j}. For each $j \in \Jc$ we should get a negative number or a very small positive number, confirming that we are indeed close to the minimum point of the problem corresponding to the subsytsem $\Jc$. If $\delta_j(\ab) \leq 0$ for all deleted sets $j \in \Jcor \setminus \Jc$, it means that we cannot do better even if we used the deleted sets. If, on the other hand, $\delta_j(\ab) > 0$ for some of the deleted sets $j$, then we should ``re-activate'' them (i.e., add them back to $\Jc$).

So we propose the following \textbf{tweaked version of the iterative process}.
\begin{mdframed}%[backgroundcolor=lightgray] 
\begin{itemize}
\item Set $r^{(0)}_{j|y} = 1 / |\Jc|$ for each $j$ and $y$. Note that $\rb^{(0)} \in \Kr$.
\item Set $\epsact = 10^{-3} |\Yc| / |\Jc|$.
\item At step $n$:
\begin{itemize}
    \item compute $\rb^{(n-1)} \Qto \qb^{(n)} \Rto \rb^{(n)}$;
    \item for any $j$ with $\sum_y r^{(n)}_{j|y} < \epsact$, remove $j$ from $\Jc$ and delete the corresponding variables $r^{(n)}_{j|y}$ for all $y$;
    \item for each $y$, re-normalize the remaining variables $r^{(n)}_{j|y}$, $j \in \Jc$ such that  
    \[ \mbox{the constraint } \sum_j r^{(n)}_{j|y} = 1 \mbox{ is satisfied again.} \]    
\end{itemize}
\item Compute the value $\phir(\rb^{(n)})$ after every $10$ steps, and terminate the iterations when this value, compared to the previous one, decreases by less than some small $\epsprec$ (say, $10^{-15}$).
\item Set $\ab = A\big( \rb^{(n)} \big)$.
\item Compute $\delta_j(\ab)$ as in \eqref{eq:delta_j} for all $j \in \Jc$ as well as for all previously deleted sets $j$.
\item If $\delta_j(\ab) \leq 0$ for each deleted $j$, then return $\phia(\ab)$ with the error bound $\ds \max_{j \in \Jc} \delta_j(\ab)$.
\item Otherwise, for each deleted $j \in \Jcor \setminus \Jc$ with $\delta_j(\ab) > 0$, add $j$ back to $\Jc$ and create the corresponding variables $r^{(n)}_{j|y}$ for each $y$, setting them to some small positive values.\footnote{Any values work but the following choice should guarantee that we get a smaller $\phir$-value right after restart: set $r^{(n)}_{j|y} \defeq \eps t_y$ for a sufficiently small $\eps$, where $\tb$ denotes the vector at which \eqref{eq:delta_j} takes its maximum.} Then re-normalize as before so that $\rb^{(n)} \in \Kr$ holds again. Finally, restart the iterations, this time with no set-deletions.
\end{itemize}
\end{mdframed}

Normally, we set $\epsact$ to be fairly small so that it is extremely unlikely that we unjustifiably delete a set $j$, and the check at the end should (essentially always) confirm this. 

Our implementation in Python is available on GitHub \cite{ge_github}.

%%%%%%%%%%%%%%%%%%%%%%%%%%%%%%%%%%%%%%%%%%%%%%%%%%%%%%%%%%%%%%%%%%%%%%%%%%%%%%%%%
\subsection{An example}
%Graph entropy of the dodecahedron

The next example shows how detecting redundant sets can speed the convergence up. 
\begin{example} \label{ex:dodeca}
Let $G$ be the dodecahedral graph: a $3$-regular graph with $20$ vertices and $30$ edges; see Figure \ref{fig:dodeca}. It has $295$ maximal independent sets. For a uniform $X$ we have 
\[ H_G(X) = \log \frac{5}{2}.\]
This can be seen easily using that $|j| \leq 8$ for each $j \in \Jc$ and that one can find five independent sets $j_1,\ldots,j_5$ such that each vertex is contained in exactly two of them (and hence $(2/5)\eb \in \Ka = \VP(G)$, where $\eb$ is the all-ones vector). % for any point $\ab \in \VP(G)$ we have $\langle \ab, \eb/20 \rnagle \leq 8/20=2/5$.

%DETAILED EXPLANATION:
%Let $K=\VP(G)$ be the vertex packing polytope. On the one hand, one can easily find five independent sets of $G$ such that each vertex is contained in exactly two of them, implying that $(2/5)\eb \in K$, where $\eb$ is the all-ones vector, and hence $H_G(X)=H_K(X)\leq \log(5/2)$. On the other hand, for each $j \in \Jc$ we have $|j| \leq 8$, implying that $(1/8)\eb \in K^\ast$, and hence $H_{K^\ast}(X) \leq \log(8)$. Using that $H_K(X)+H_{K^\ast}(X)=H(X) = \log(20)= \log(5/2)+\log(8)$, we get that one must have equality everywhere.

%
\begin{figure}[ht]
\centering
\includegraphics[width=2in]{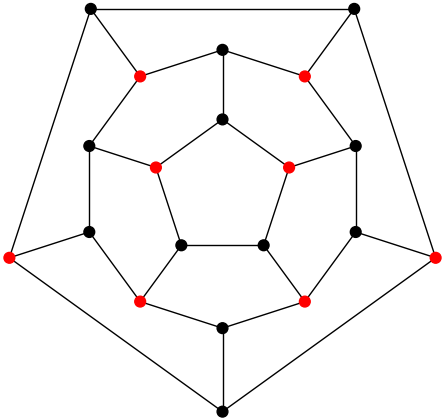}
\caption{The graph of the dodecahedron. The red vertices form an independent set of size $8$. By ``rotation'' one can get five independent sets in a way that each vertex is contained in exactly two of them.}
\label{fig:dodeca}
\end{figure}

Starting from a random point $\qb^{(0)}$, the blue dots below show the value $\phir\big( \rb^{(n)} \big)$ for each iteration $n=1,\ldots, 75$.
\begin{center}
\includegraphics[width=3in]{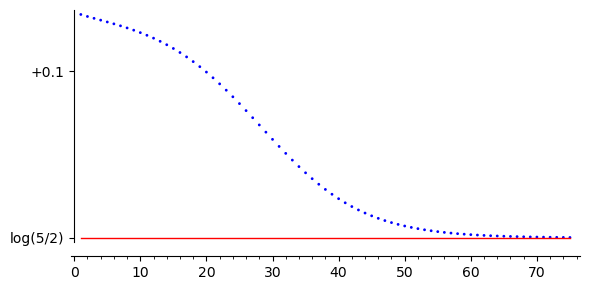}    
\end{center}
For comparison, we run the process from the same starting point, but this time deleting a set $j$ if $r^{(n)}_j$ gets below $\eps_{\mbox{\scriptsize{act}}} \defeq 2^{-20} \approx 10^{-6}$. Up to $n=75$ only $23$ sets were deleted and there was little difference in the value compared to the plot above. Afterwards the deletion rate accelerated and by step $n=101$ all but the five independent sets of size $8$ were deleted. The figure below compares the values in the two cases after step $80$. (We used blue dots for the original process with no set-deletions and green dots for the one with set-deletions.)
\begin{center}
\includegraphics[width=3in]{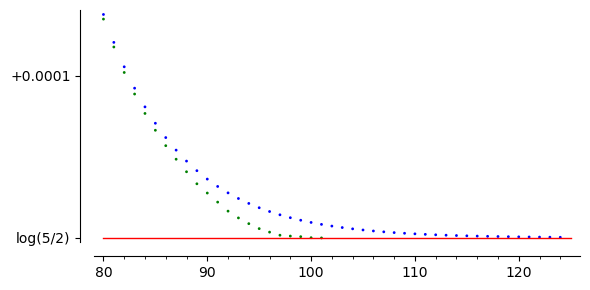}    
\end{center}
The original algorithm (blue dots) needed $239$ steps to get within distance $10^{-13}$ of the true minimum $\log(5/2)$, while the refined process (green dots) reached this threshold  after only $108$ iterations. We plotted the distance to the minimum in a logarithmic scale below: the horizontal axis shows the number of steps, while the vertical axis shows $-\log_{10}$ of the distance (i.e., the number of precise decimal digits essentially).
\begin{center}
\includegraphics[width=3in]{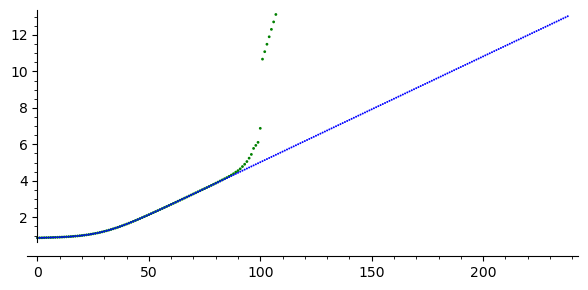}    
\end{center}
We see that there is a considerable leap in precision at the point when all $290$ ``redundant'' independent sets have been deleted. Both versions eventually settle into a phase where the precision (``number of precise digits'') grows at a linear rate. The tweaked version clearly exhibits a faster rate. In fact, the analysis in the next section will reveal that this faster rate is $2 \cdot \lg(8/5) \approx 0.408$ compared to the rate $\lg(8/7) \approx 0.058$ of the original version. 
\end{example}
%

%%%%%%%%%%%%%%%%%%%%%%%%%%%%%%%%%%%%%%%%%%%%%%%%%%%%%%%%%%%%%%%%%%%%%%%%%%%%%%%%%
\subsection{Rate of convergence for graph entropy} 

As we have seen in the example of the previous section, the precision of the iterative algorithm appears to grow at some linear rate (for steps $n \geq n_0$). The following analysis confirms this observation and explains how one can determine this (``eventual'') rate in the unconditioned setting (i.e., graph entropy). Rigorous proofs would make the analysis undesirably long and technical so in this section we settle for only sketching the arguments. 

\subsubsection*{Formulas for graph entropy}
In the special case of graph entropy (i.e., when $|\Yc|=1$ so there is only one $y$) the formulas simplify considerably. First of all, we have $p^y=p^{y|x}=1$ and $p_{x,y}=p_{x|y}=p_x$, and we may omit $y$ in the indices. So $\rb$ now denotes a point $(r_j)_{j \in \Jc}$ in the set 
\[ \Kr = \bigg\{ \rb= \big( r_j \big) \, : \, 
r_j \geq 0; \, \sum_{j} r_j = 1 \bigg\} \subset \Rb^{\Jc} .\]
Furthermore, we have the following simple formulas:
\begin{align*}
R_j(\qb) &= \sum_{x \in j} p_x \, q_{j|x} ;\\
A_x(\rb) &= \sum_{j \ni x} r_j ;\\
Q_{j|x}(\rb) &= 
\begin{cases}
0  & \mbox{ if } x \notin j ;\\
r_j \big/ A_x(\rb) & \mbox{ if } x \in j ;
\end{cases} \\
%\phiq(\qb) &= \sum_x p_x \sum_{j \ni x} q_{j|x} \log q_{j|x} 
%- \sum_j R_j(\qb) \log R_j(\qb) ;\\
\phir(\rb) &= - \sum_x p_x \log A_x(\rb) 
=  - \sum_x p_x \log \sum_{j \ni x} r_j .
\end{align*}
So $A$ is simply a linear $\Rb^\Jc \to \Rb^\Xc$ map corresponding to the following matrix $M \in \Rb^{\Jc \times \Xc}$:
\[ M_{x,j} \defeq 
\begin{cases}
1 & \mbox{if } x \in j ;\\
0 & \mbox{if } x \notin j .
\end{cases} 
\]
That is, the columns of $M$ are the indicators functions of the sets $j$, and we have $A(\rb)=M \rb$.

As for the stepping map $\Fr \colon \rb \mapsto \rprb$ for the r-problem, we have 
\begin{equation} \label{eq:Fr_special}
r'_j = \bigg( 
\underbrace{\sum_{x \in j} \frac{p_x}{A_x(\rb)}}_{\Delta_j(\rb) \defeq }
\bigg) r_j .
\end{equation}
It follows that if $\rb$ is a fixed point of $\Fr$ (i.e., $r'_j=r_j$ for each $j$), if and only if $\Delta_j(\rb)=1$ for any $j$ with $r_j>0$.

It is worth mentioning that if $\partial_j$ denotes the partial derivative w.r.t.\ the variable $r_j$, then we have 
\[ \partial_j \phir(\rb) 
= - \sum_x p_x \frac{\partial_j A_x(\rb)}{A_x(\rb)} 
=  - \sum_{x \in j} \frac{p_x}{A_x(\rb)} = -\Delta_j(\rb) .
 \]
So what the stepping map $\Fr(\rb)$ does in this unconditioned setting is simply multiply $\rb$ (coordinate-wise) by the negative of the gradient $\nabla \phir(\rb)$.

\subsubsection*{Case of no redundant sets}
We start our analysis with the case when each $j$ is ``used'' ($r_j>0$) at the minimum point $\rb$ of $\phir$. This is always the case in the tweaked version of the algorithm which ensures that all redundant sets are eventually deleted. Note that $\Delta_j(\rb)=1$ for all $j$ in this case, and hence the gradient $\nabla \phir (\rb) = -\eb$ for the all-ones vector $\eb$.

For a vector $\vb$ we will use the notation $D(\vb)$ for the corresponding diagonal matrix. In particular, $D(\rb) \in \Rb^{\Jc \times \Jc}$ is the diagonal matrix with entries $r_j$, while $D(\pb \ab^{-2}) \in \Rb^{\Xc \times \Xc}$ is the diagonal matrix with entries $p_x/a_x^2$.
\begin{lemma} \label{lem:matrix_N}
Assume that $\rb$ is a minimum point of $\phir$ and that each $r_j>0$. Set $\ab \defeq A(\rb) = M \rb$ so that $\ab$ is the minimum point of $\phia$. Let 
\[ N \defeq D(\rb) M^\top D(\pb \ab^{-2}) M .\]
Then $N \in \Rb^{\Jc \times \Jc}$ is a square matrix with nonnegative entries and with the following properties: 
\begin{itemize}
\item in each column the sum of the entries is $1$ (and hence $1$ is an eigenvalue); 
\item $N$ is diagonalizable with eigenvalues in $[0,1]$;
\item $\ker N = \ker M$.
\end{itemize}
\end{lemma}
The proof of the lemma can be found at the end of the section. %and state our result about the convergence rate.

\begin{claim*}
The rate of convergence is governed by the smallest nonzero eigenvalue $\lam$ of $N$: 
\begin{equation} \label{eq:conv_rate}
\phir\big( \rb^{(n)} \big) = \min_{\Kr} \phir + O\big( (1-\lam)^{2n} \big) .
\end{equation}
So the rate of growth for the precision is $-2 \lg(1-\lam)$.    
\end{claim*}
\begin{example*}
The dodecahedral graph of Example \ref{ex:dodeca} has five independent sets of size $8$. Note that their pairwise intersections are of size $2$. Independent sets of smaller size are all redundant so let $\Jc$ be the set of these five sets. Then we have $r_j =1/5$ for all $j$ and $a_x=2/5$ for all $x$. It follows that each diagonal entry of $N$ is equal to $1/2$, while all other entries are equal to $1/8$. Therefore, the eigenvalues (with multiplicity) are  $1;\ 3/8;\ 3/8;\ 3/8;\ 3/8$. So $\lam=3/8$ and we get that the rate of growth for the precision is $-2 \lg(5/8) + o(1)$, which is consistent with our numerical findings presented earlier.
\end{example*}

Now we will sketch the proof of the claim. For the sake of simplicity we assume that $\ker M = \{0\}$. 
In this case $\rb$ is the unique minimum point of $\phir$ and hence $\rb^{(n)} \to \rb$ as $n \to \infty$. So difference vector $\rhob^{(n)} \defeq \rb^{(n)} - \rb$ converges to $0$. Set $\alb^{(n)} \defeq M \rhob^{(n)}$. Then $\| \alb^{(n)} \| = O \big( \| \rhob^{(n)} \| \big)$ converges to $0$ as well.

With these notations, we compute the coordinates of the next point $\rb^{(n+1)} = \Fr \big( \rb^{(n)} \big)$ of our sequence: 
\[ r^{(n+1)}_j = \bigg( \sum_{x \in j} \frac{p_x}{a_x+\al^{(n)}_x} \bigg) ( r_j + \varrho^{(n)}_j ) 
= \bigg( \underbrace{\sum_{x \in j} \frac{p_x}{a_x}}_{=1} 
- \sum_{x \in j} \frac{p_x \al^{(n)}_x }{a^2_x} 
+ \sum_{x \in j} \frac{p_x \big( \al^{(n)}_x \big)^2 }{a^2_x(a_x+\al^{(n)}_x)}
\bigg) ( r_j + \varrho^{(n)}_j ).\]
It follows that 
\[ \varrho^{(n+1)}_j = \varrho^{(n)}_j - r_j \sum_{x \in j} \frac{p_x}{a_x^2} \al^{(n)}_x + O\big( \| \rhob^{(n)} \|^2 \big).\]
Since $\alb^{(n)} = M \rhob^{(n)}$, we conclude that 
\begin{equation} \label{eq:diff_vector_step}
\rhob^{(n+1)} = (I-N) \rhob^{(n)} + O\big( \| \rhob^{(n)} \|^2 \big) .
\end{equation}
Recall that $\lam$ is the smallest nonzero eigenvalue of $N$. Under our assumption $\ker N = \ker M = \{0\}$, so $0$ is not an eigenvalue now, meaning that the largest eigenvalue of the diagonalizable matrix $I-N$ is $1-\lam$. Then it is not hard to deduce from \eqref{eq:diff_vector_step} that 
\[ \| \rhob^{(n)} \| = O\big( (1-\lam)^n \big) .\]
%
%It is easier to consider $\rb^{1/2} \rhob^{(n)}$ because then the corresponding matrix is symmetric and hence its operator norm is equal to the spectral radius $1-\lam$. Then first one shows $O\big( (1-\lam+\eps)^n \big)$ for a small $\eps$. Then notice that the product $\prod_{n=1}^\infty (1+c^n)$ for a fixed $c<1$ is finite.
As for the $\phir$-value, 
\[ \phir\big( \rb^{(n)} \big) = \phir\big( \rb + \rhob^{(n)} \big) 
= \phir(\rb) + \underbrace{\nabla \phir (\rb)}_{=-\eb} \cdot \rhob^{(n)} + O\big( \| \rhob^{(n)} \|^2 \big) ,\]
where the dot product $\eb \cdot \rhob^{(n)}$, which is simply the sum of the coordinates of $\rhob^{(n)}$, is equal to $0$ because this sum is $1$ both for $\rb \in \Kr$ and for $\rb^{(n)} \in \Kr$. Then \eqref{eq:conv_rate} clearly follows.

In fact, heuristically, 
%since $\| \rhob^{(n)} \| \to 0$, we have 
$\rhob^{(n+1)} \approx (I-N) \rhob^{(n)}$ means that if we write $\rhob^{(n)}$ in an eigenbasis, then the parts corresponding to smaller eigenvalues will become negligible and $\rhob^{(n)}$ will be close to an eigenvector with the maximal eigenvalue $1-\lam$, and hence $\rhob^{(n+1)} \approx (1-\lam) \rhob^{(n)}$ for large $n$ (at least for typical starting points). 

When $\ker M = \ker N$ has positive dimension, $1$ is an eigenvalue of $I-N$ and it seems that we do not necessarily have exponential decay. Note, however, that vectors from $\ker M$ do not make a difference from the point of view of $\varphi$-value because for any $\vb \in \ker M$ we have $A( \rprb + \vb) = A(\rprb)$, and hence $\phir( \rprb + \vb) = \phir(\rprb)$.

We close this section by proving the required properties of $N$.
\begin{proof}[Proof of Lemma \ref{lem:matrix_N}]
Since $\rb$ is a minimum point of $\phir$, by Proposition \ref{prop:min_fixed} $\rb$ is a fixed point of $\Fr$, and hence $\Delta_j(\rb)=1$ for each $j$. This means that $M^\top (\pb \ab^{-1})$ is the all-ones vector $\eb$. 

We also have $M \rb = A(\rb) = \ab$. Then 
\[ N^\top \eb = M^\top D(\pb \ab^{-2}) M \underbrace{D(\rb) \eb}_{=\rb} 
= M^\top D(\pb \ab^{-2}) \underbrace{M \rb}_{=\ab} 
= M^\top \underbrace{D(\pb \ab^{-2}) \ab}_{=\pb \ab^{-1}} = M^\top (\pb \ab^{-1}) = \eb ,\]
confirming that $1$ is an eigenvalue and that each column sum of $N$ is $1$. Since all entries are nonnegative, it follows that $N$ is a (left) stochastic matrix, and hence $|\lambda| \leq 1$ for each eigenvalue $\la$.

Furthermore, $N$ is similar to a positive semidefinite matrix:
\[ D(\rb^{-1/2}) N D(\rb^{1/2}) = D(\rb^{1/2}) M^\top D(\pb \ab^{-2}) M D(\rb^{1/2}) ,\]
so $N$ is diagonalizable with nonnegative eigenvalues.

Finally, let $B = D(\pb^{1/2} \ab^{-1}) M$. Then 
\[ \ker M = \ker B = \ker(B^\top B) = \ker( M^\top D(\pb \ab^{-2}) M ) = \ker N .\]
\end{proof}

\subsubsection*{Convergence for a redundant set}
If $r_j=0$ for a given $j$ at the limiting point $\rb = \lim_{n \to \infty} \rb^{(n)}$, then we must have $\Delta_j( \rb ) < 1$. We then eventually see an exponential decay in the $j$-coordinate:
\[ r^{(n+1)}_j = \big( \Delta_j( \rb ) + o(1) \big) r^{(n)}_j .\]
Then the growth rate for the precision of $\phir\big( \rb^{(n)} \big)$ is at most $-\log \Delta$, where $\Delta$ denotes the largest value of $\Delta_j( \rb )$ among all redundant sets $j$. For the dodecahedral graph we have $\Delta=7/8$, which is consistent with our previous numerical findings.

%%%%%%%%%%%%%%%%%%%%%%%%%%%%%%%%%%%%%%%%%%%%%%%%%%%%%%%%%%%%%%%%%%%%%%%%%%%%%%%
\section{Conclusion}

The optimal rate of lossless functional compression with side information at the receiver can be characterized by conditional graph entropy. However, little can be found in the literature about this entropy notion. So we set out to study conditional graph entropy in more detail. Our starting point was the original formula \eqref{eq:mut_inf_cond} which can also be formulated as the q-problem. Our first step was the discovery of the r-problem and the stepping maps $Q, R$ between the two problems. This interaction was reminiscent of the alternating optimization in the EM algorithm, which made us realize that there might be an underlying alternating minimization problem. This, in turn, helped us to analyze the iterative algorithm because we could turn to the general theory of Csisz\'ar and Tusn\'ady: we verified that the 3-point, 4-point, and 5-point properties hold in our setting, and showed that the iterations always converge to the minimum. Our theoretical results lead to a practical algorithm for computing conditional graph entropy that also comes with an error bound based on a dual problem.

Alternating optimization has a vast and growing literature. The fact that (conditional) graph entropy is part of this family of problems will hopefully inspire future research in the area.

%%%%%%%%%%%%%%%%%%%%%%%%%%%%%%%%%%%%%%%%%%%%%%%%%%%%%%%%%%%%%%%%%%%%%%%%%%%%%%%
\section*{Acknowledgments}
We are grateful to two anonymous reviewers for their numerous valuable remarks and suggestions that helped us tremendously in improving the paper.

\bibliographystyle{plain}
\bibliography{refs}

\end{document}